\def \F {{\mathbb F}}
\def \Q {{\mathbb Q}}
\def \Z {{\mathbb Z}}
\def \C {{\mathbb C}}
\def \V {{\mathbb V}}
\def \o {{\omega}}
\def \Tr {{\rm Tr_n}}
\def \T {{\rm Tr}}
\newtheorem{theorem}{Theorem}[section]
\newtheorem{lemma}[theorem]{Lemma}
\newtheorem{proposition}[theorem]{Proposition}
\newtheorem{remark}[theorem]{Remark}
\newtheorem{corollary}[theorem]{Corollary}
\def\cB{{\mathcal B}}
\def\cH{{\mathcal H}}
\def\cW{{\mathcal W}}
\def\cGB{\mathcal{GB}}
\def\cc{{\bf c}}
\def\uu{{\bf u}}
\def\xx{{\bf x}}
\def\00{{\bf 0}}
\def\11{{\bf 1}}
\def\+{\oplus}
\def \F {{\mathbb F}}
\def \Q {{\mathbb Q}}
\def \Z {{\mathbb Z}}
\def \V {{\mathbb V}}
\def \o {{\omega}}
\def \Tr {{\rm Tr_n}}
\def \T {{\rm Tr}}
\newcommand{\fun}[3]{{{#1}\,:\,{#2}\,\rightarrow\,{#3}}}
\newcommand{\zetak}[1][2^k]{\zeta_{{#1}}}
\newcommand{\GWa}[3][]{{\mathcal H}^{(#2)}_{#3#1}}
\newcommand{\Wa}[2][]{{\mathcal W}_{#2#1}}
\newcommand{\GF}[2][2]{{\mathbb F}_{{#1}^{#2}}}
\newcommand{\VV}[1]{{\mathbb V}_{#1}}
\newcommand{\ZZ}[1]{{\mathbb Z}_{#1}}
\newcommand{\QQ}{{\mathbb Q}}
\newcommand{\rr}{{\bf r}}
\begin{document}


\title{\huge\bf  Decomposing generalized bent and hyperbent functions}

\author{\Large  Thor Martinsen$^1$, Wilfried Meidl$^2$, \and \Large 
Sihem Mesnager$^3$,  Pantelimon St\u anic\u a$^1$
\vspace{0.4cm} \\
\small $^1$Department of Applied Mathematics, \\
\small Naval Postgraduate School, Monterey, CA 93943-5212, U.S.A.;\\
\small Email: {\tt \{tmartins,pstanica\}@nps.edu}\\
\small $^2$Johann Radon Institute for Computational and Applied Mathematics,\\
\small Austrian Academy of Sciences, Altenbergerstrasse 69, 4040-Linz, Austria;\\
\small Email: {\tt meidlwilfried@gmail.com}\\
\small $^3$ Department of Mathematics, \\
\small Universities of Paris VIII and XIII and Telecom ParisTech,\\
\small LAGA, UMR 7539, CNRS, Sorbonne Paris Cit\'e;\\
\small Email: {\tt smesnager@univ-paris8.fr}
}

\date{\today}
\maketitle
\thispagestyle{empty}

\begin{abstract}
In this paper we introduce generalized hyperbent functions from $\F_{2^n}$ to $\Z_{2^k}$, and investigate decompositions of generalized (hyper)bent functions.
We show that generalized (hyper)bent functions from $\F_{2^n}$ to $\Z_{2^k}$ consist of components which are generalized (hyper)bent functions
from $\F_{2^n}$ to $\Z_{2^{k^\prime}}$ for some $k^\prime < k$. For odd $n$, we show that the Boolean functions associated to a generalized bent
function form an affine space of semibent functions. This complements a recent result for even $n$, where the associated Boolean functions are bent.
\end{abstract}

{\bf Keywords}  Boolean functions, Walsh-Hadamard transforms, bent functions, semi-bent functions, hyper bent functions, generalized bent functions, cyclotomic fields.

\section{Introduction}

Let $\V_n$ be an $n$-dimensional vector space over $\F_2$ and for an integer $q$, let $\Z_q$ be the ring of integers modulo $q$.
Let $\Re(z)=\alpha$ and $\Im(z)=\beta$ be the real and imaginary parts of a complex number $z=\alpha+\beta i$, respectively.
For a {\it generalized Boolean function} $f:\V_n\to \Z_q$ we define the {\it generalized Walsh-Hadamard transform} to be the complex valued function
\[ \mathcal{H}^{(q)}_f(\uu) = \sum_{\xx\in \V_n}\zeta_q^{f(\xx)}(-1)^{\langle\uu,\xx\rangle}, \]
where $\zeta_q = e^{\frac{2\pi i}{q}}$ and $\langle\uu,\xx\rangle$ denotes a (nondegenerate) inner product on $\V_n$
(we often use $\zeta$, $\cH_f$, instead of $\zeta_q$, respectively, $\cH_f^{(q)}$,  when $q$ is fixed).
For $q=2$, we obtain the usual {\it Walsh-Hadamard transform}
\[ \mathcal{W}_f(\uu) = \sum_{\xx\in \V_n}(-1)^{f(\xx)}(-1)^{\langle\uu,\xx\rangle}. \]
If $\V_n=\F_2^n$, the vector space of the $n$-tuples over $\F_2$, we use the conventional dot product $\uu\cdot\xx$ for $\langle\uu,\xx\rangle$.
The standard inner product of $u,x\in\F_{2^n}$ is $\Tr(ux)$, where $\Tr(z)$ denotes the absolute trace of $z\in\F_{2^n}$.
Most of our general results we will present in the notation of $\V_n = \F_2^n$. For results where we emphasize hyperbent properties we require
$\V_n = \F_{2^n}$.

We use the notations as in \cite{mms0,mms,smgs}. We denote the set of all generalized Boolean functions by $\mathcal{GB}_n^q$ and when $q=2$, by $\mathcal{B}_n$.
A function $f:\V_n\rightarrow\Z_q$ is called {\em generalized bent} ({\em gbent}) if $|\mathcal{H}_f(\uu)| = 2^{n/2}$ for all $\uu\in \V_n$.
We recall that a function $f$ for which $|\mathcal{W}_f(\uu)| = 2^{n/2}$ for all $\uu\in \V_n$ is called a {\em bent} function, which only exist for even $n$ since
$\mathcal{W}_f(\uu)$ is an integer. Further, recall that $f\in\mathcal{B}_n$, $n$ odd, is called {\it semibent} if $|\mathcal{W}_f(\uu)| \in \{0,2^{(n+1)/2}\}$ for all $\uu\in \V_n$. A jubilee survey paper on bent functions giving an historical perspective, and making pertinent
connections to designs, codes and cryptography is \cite{CarletMesnagerDCC2015}. A book devoted especially to bent functions and containing a complete survey 
(including variations, generalizations and applications) is \cite{MesnagerBook}.

In Section \ref{prelim} we recall some results which are of importance
to our considerations and will be used in the following sections.  In
Section \ref{hyper} we introduce generalized hyperbent functions, and
show hyperbentness for classes of gbent functions introduced in
\cite{mms}, which can be seen as generalized Dillon's $PS$ functions.
In Section \ref{decomp} we investigate decompositions of generalized
(hyper)bent functions.  We show that generalized (hyper)bent functions
from $\V_n$ to $\Z_{2^k}$ consist of components which are generalized
(hyper)bent functions from $\V_n$ to $\Z_{2^{k^\prime}}$ for some
$k^\prime < k$. For odd $n$, we show that the Boolean functions
associated to a generalized bent function form an affine space of
semibent functions. This complements a recent result for even $n$,
where the associated Boolean functions are bent.

\section{Preliminaries}
\label{prelim}

We begin by collecting some results which we will subsequently use in the paper. We start with a lemma, which is Proposition 3 in~\cite{mms0}.
\begin{lemma}
\label{valdis}
Let $n=2m$ be even, and for a function $f:\V_n\rightarrow\Z_{2^k}$ and $\uu\in\V_n$, let $f_{\uu}(\xx) = f(\xx)+2^{k-1}(\uu\cdot\xx)$, and let
$b_j^{(\uu)} = |\{\xx\in\V_n\;:\;f_{\uu}(\xx) = j\}|$, $0\le j\le 2^k-1$. Then $f$ is gbent if and only if for all $\uu\in\V_n$
there exists an integer $\rho_{\uu}$, $0\le \rho_{\uu} \le 2^{k-1}-1$, such that
\[ b^{(\uu)}_{2^{k-1}+\rho_{\uu}} = b^{(\uu)}_{\rho_{\uu}}\pm 2^m\; \mbox{and}\; b^{(\uu)}_{2^{k-1}+j} = b^{(\uu)}_j,\,\mbox{for}\; 0\le j \le 2^{k-1}-1, j\ne \rho_{\uu}. \]
\end{lemma}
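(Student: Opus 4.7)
The plan is to start by re-expressing the generalized Walsh--Hadamard transform in the cyclotomic integer basis, and then characterize when its modulus attains $2^{n/2}$. Since $(-1)^{\uu\cdot\xx}=\zeta^{2^{k-1}(\uu\cdot\xx)}$ for $\zeta:=\zeta_{2^k}$, one immediately obtains $\mathcal{H}_f(\uu)=\sum_{j=0}^{2^k-1}b_j^{(\uu)}\zeta^j$. Using $\zeta^{2^{k-1}}=-1$ to combine each pair of indices $(j,j+2^{k-1})$ with $0\le j\le 2^{k-1}-1$ rewrites this as $\mathcal{H}_f(\uu)=\sum_{j=0}^{2^{k-1}-1}c_j\zeta^j$, where $c_j:=b_j^{(\uu)}-b_{j+2^{k-1}}^{(\uu)}\in\Z$. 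Because the minimal polynomial of $\zeta$ over $\Q$ is $\Phi_{2^k}(X)=X^{2^{k-1}}+1$, the set $\{1,\zeta,\ldots,\zeta^{2^{k-1}-1}\}$ is a $\Z$-basis of $\Z[\zeta_{2^k}]$, so the expansion is unique; the lemma's conclusion is thus precisely the statement that there exists a single index $\rho_\uu$ with $c_{\rho_\uu}=\pm 2^m$ and $c_j=0$ for $j\neq\rho_\uu$.

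The ``if'' direction is immediate: under that concentration hypothesis, $\mathcal{H}_f(\uu)=\pm 2^m\zeta^{\rho_\uu}$ has modulus $2^m=2^{n/2}$, so $f$ is gbent. For the converse, I set $\alpha:=\mathcal{H}_f(\uu)\in\Z[\zeta_{2^k}]$ with $|\alpha|^2=2^n$. The first useful observation is that this modulus condition is Galois-stable: since complex conjugation is itself an element of the abelian group $\mathrm{Gal}(\Q(\zeta_{2^k})/\Q)$, it commutes with every automorphism $\sigma$, so $|\sigma(\alpha)|^2=\sigma(\alpha\bar\alpha)=\sigma(2^n)=2^n$. Hence every Galois conjugate of $\alpha$ has absolute value exactly $2^m$.

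The core step is to conclude $\alpha=\pm 2^m\zeta^{\rho_\uu}$. I would invoke the ramification of $2$ in $\Z[\zeta_{2^k}]$: the unique prime above $2$ is $(1-\zeta)$, with $(2)=(1-\zeta)^{2^{k-1}}$. Because $|N_{\Q(\zeta_{2^k})/\Q}(\alpha)|=(2^m)^{2^{k-1}}$, the ideal $(\alpha)$ is forced to equal $(2)^m$, so $\alpha=2^m u$ for some unit $u\in\Z[\zeta_{2^k}]^\times$. This unit has all Galois conjugates on the unit circle, so Kronecker's theorem forces $u$ to be a root of unity; the roots of unity in $\Q(\zeta_{2^k})$ are exactly $\pm\zeta^j$ for $0\le j\le 2^{k-1}-1$, yielding $u=\pm\zeta^{\rho_\uu}$. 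Reading coefficients off in the basis then gives exactly the claimed relations among the $b_j^{(\uu)}$.

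The main obstacle is the converse direction, and specifically the passage from $|\alpha|=2^m$ to $\alpha=2^m\cdot(\text{unit})$, which leans on ramification theory and Kronecker's theorem. A more elementary substitute would be induction on $k$: the cases $k=1,2$ are direct (for $k=2$, any $a+bi\in\Z[i]$ with $a^2+b^2=2^{2m}$ is associate to $(1+i)^{2m}=2^m i^m$), and the inductive step can be carried out by splitting $f$ according to the parity of its least significant bit, reducing the modulus to $2^{k-1}$ and applying the inductive hypothesis to the two resulting pieces; this route is longer but avoids heavy algebraic machinery.
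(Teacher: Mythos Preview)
Your argument is correct. Note, however, that the paper does not prove this lemma itself: it is quoted from~\cite{mms0} (as Proposition~3 there), and the regularity statement that $\mathcal{H}_f^{(2^k)}(\uu)=2^{n/2}\zeta_{2^k}^{f^*(\uu)}$ is likewise only cited. The nearest in-paper analog is the proof of Proposition~\ref{GB-con} (the odd-$n$ counterpart): there the ``if'' direction is the same direct computation you give, and for the ``only if'' direction the paper simply \emph{invokes} the cited regularity to write $\mathcal{H}_f^{(2^k)}(\uu)$ as $2^{n/2}\zeta_{2^k}^{\rho_\uu}$ and then compares coefficients in the basis $\{1,\zeta_{2^k},\ldots,\zeta_{2^k}^{2^{k-1}-1}\}$, exactly as you do in your final step. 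The substantive difference is that you supply a self-contained derivation of regularity via the total ramification of $2$ in $\Z[\zeta_{2^k}]$ together with Kronecker's theorem, rather than importing it. This costs some algebraic-number-theory machinery but makes the lemma stand on its own; the paper's route is shorter only because the hard step is outsourced to~\cite{mms0}.
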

%
%
In~\cite{mms0} it is shown  that, similar to bent functions (in even and odd characteristic), the value set of $\mathcal{H}_f^{2^k}$ is quite restricted.
\begin{proposition}
If $f\in\mathcal{GB}_n^{2^k}$ is gbent, then
\[ \mathcal{H}_f^{2^k}(\uu) = 2^{n/2}\zeta_{2^k}^{f^*(\uu)} \]
for some function $f^*\in\mathcal{GB}_n^q$, except for $n$ odd and $k=2$, in which case we have
\[ \mathcal{H}_f^4(\uu) = 2^{\frac{n-1}{2}}(\pm 1\pm i). \]
\end{proposition}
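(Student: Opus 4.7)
The plan is to fix $\uu \in \V_n$, set $z := \mathcal{H}_f^{2^k}(\uu) \in \mathbb{Z}[\zeta_{2^k}]$, and exploit the constraint $z\bar z = |z|^2 = 2^n$. The proof splits into three cases. For \textbf{even $n$}, I would appeal directly to Lemma~\ref{valdis}: expanding $z = \sum_{j=0}^{2^k-1} b_j^{(\uu)}\zeta_{2^k}^j$ and using $\zeta_{2^k}^{2^{k-1}} = -1$ to fold the sum in half yields $z = \sum_{j=0}^{2^{k-1}-1}\bigl(b_j^{(\uu)} - b_{j+2^{k-1}}^{(\uu)}\bigr)\zeta_{2^k}^j$. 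Lemma~\ref{valdis} says that all these differences vanish except at one index $\rho_{\uu}$, where the difference equals $\mp 2^{n/2}$; absorbing the sign via $-1 = \zeta_{2^k}^{2^{k-1}}$ gives $z = 2^{n/2}\zeta_{2^k}^{f^*(\uu)}$ with $f^*(\uu) \in \{\rho_{\uu},\, \rho_{\uu}+2^{k-1}\}$.

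For \textbf{odd $n$ with $k \geq 3$}, Lemma~\ref{valdis} is no longer available, but $\sqrt{2} = \zeta_8 + \zeta_8^{-1} \in \mathbb{Z}[\zeta_{2^k}]$, so $2^{n/2} = 2^{(n-1)/2}\sqrt{2}$ is a bona fide algebraic integer. The goal is to show that $w := z/2^{n/2}$ is a root of unity. The key input is the total ramification of $2$: one has $(2) = \pi^{2^{k-1}}$ with $\pi := 1-\zeta_{2^k}$. Since $\pi$ is the unique prime above $2$ and is fixed by complex conjugation, $v_\pi(\bar z) = v_\pi(z)$, and the equation $z\bar z = 2^n$ forces $v_\pi(z) = n\cdot 2^{k-2} = v_\pi(2^{n/2})$; hence $w \in \mathbb{Z}[\zeta_{2^k}]$. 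Each Galois conjugate satisfies $|w^\sigma| = 1$, because applying $\sigma \in \mathrm{Gal}(\mathbb{Q}(\zeta_{2^k})/\mathbb{Q})$ to $z\bar z = 2^n$ (using that this Galois group is abelian and contains complex conjugation) gives $|z^\sigma| = 2^{n/2}$, while $|(2^{n/2})^\sigma| = 2^{n/2}$. Kronecker's theorem then forces $w = \zeta_{2^k}^{f^*(\uu)}$ for some $f^*(\uu) \in \Z_{2^k}$.

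For \textbf{odd $n$ with $k = 2$}, the previous argument collapses because $\sqrt{2} \notin \mathbb{Q}(i)$. Instead I would write $z = a+bi$ with $a, b \in \mathbb{Z}$ and solve the Diophantine equation $a^2+b^2 = 2^n$ directly: $a$ and $b$ must share the same parity, and a short $2$-adic descent (halve both whenever they are even, stopping when both are odd, at which point the residual equation $a'^2 + b'^2 = 2$ forces $|a'|=|b'|=1$) yields $|a|=|b|=2^{(n-1)/2}$, i.e.\ $z = 2^{(n-1)/2}(\pm 1 \pm i)$. The main obstacle throughout is the odd-$n$, $k \geq 3$ case: once the total ramification of $2$ is in hand, the Galois/Kronecker argument is essentially forced, but without this observation the divisibility of $z$ by $2^{n/2}$ inside $\mathbb{Z}[\zeta_{2^k}]$ would be awkward to establish by bare hands.
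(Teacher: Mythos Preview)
The paper does not actually prove this proposition: it is quoted from~\cite{mms0} as background, with no argument supplied here. So there is no in-paper proof to compare against; I can only assess your argument on its own merits.

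Your proof is correct in all three cases. The even-$n$ case is an immediate unwinding of Lemma~\ref{valdis} and is certainly the intended argument. The odd-$n$, $k=2$ case is the standard $2$-adic descent on $a^2+b^2=2^n$ and is fine.

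The interesting part is odd $n$ with $k\ge 3$. Your route via total ramification of $2$ in $\mathbb{Z}[\zeta_{2^k}]$ together with Kronecker's theorem is clean and correct: the key computation $v_\pi(z)=n\cdot 2^{k-2}=v_\pi(2^{n/2})$ is right (using $\bar\pi=-\zeta_{2^k}^{-1}\pi$ so that complex conjugation preserves the $\pi$-adic valuation), and the abelianness of the Galois group is exactly what makes $\sigma(\bar z)=\overline{\sigma(z)}$, giving $|\sigma(w)|=1$ for all conjugates. One small point worth making explicit: after Kronecker yields that $w$ is \emph{some} root of unity, you still need that the only roots of unity in $\mathbb{Q}(\zeta_{2^k})$ are the $2^k$-th roots (so that $w=\zeta_{2^k}^{f^*(\uu)}$ rather than, say, a cube root); this is standard but should be said.

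By contrast, the combinatorial machinery developed later in the present paper (Proposition~\ref{GB-con}) goes in the other direction: it \emph{assumes} the regularity statement you are proving and then reads off the value-distribution consequences. So your algebraic-number-theory argument is genuinely independent of the paper's internal logic, which is appropriate here.
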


In accordance with the terminology for classical bent functions we say that gbent functions are regular (except for the case when $n$ is odd
and $k=2$), and we call the function $f^*$ the {\em dual} of $f$. With the standard proof for bent functions one can show that the dual $f^*$
is also gbent and $(f^*)^* = f$.

Let $f\in \mathcal{GB}_n^{2^k}$, then we can represent $f$ uniquely as
\[ f(\xx) = a_1(\xx) + 2a_2(\xx) + \cdots + 2^{k-1}a_k(\xx) \]
for some Boolean functions $a_i$, $1\le i\le k$. The nature of these Boolean functions when $f$ is gbent has been one of the main topics
in research on gbent functions. In the next proposition and the following remark we summarize some main results on these Boolean functions.
\begin{proposition}
\label{gbebe}
Let $f(\xx)$ be a gbent function in $\mathcal{GB}_n^{2^k}$, $k>1$, (uniquely) given as
\[ f(\xx) = a_1(\xx) + 2a_2(\xx) + \cdots + 2^{k-2}a_{k-1}(\xx) + 2^{k-1}a_k(\xx), \]
$a_i\in\mathcal{B}_n$, $1\le i\le k$, and for $\mathbf{c} = (c_1,c_2,\ldots,c_{k-1}) \in \F_2^{k-1}$, let
$g_{\mathbf{c}}$ be the Boolean function
\begin{equation}
\label{gc}
g_{\mathbf{c}}(\xx) = c_1a_1(\xx) \+ c_2a_2(\xx) \+ \cdots \+ c_{k-1}a_{k-1}(\xx) \+ a_k(\xx).
\end{equation}
\begin{itemize}
\item[$(i)$] \textup{\cite{mms0}} If $n$ is even, then for all $\mathbf{c}\in \F_2^{k-1}$ the Boolean function $g_{\mathbf{c}}$ is a bent function.
\item[$(ii)$] \textup{\cite{mms0, ST09, sgcgm}} If $n$ is odd, and $k=2,3,4$, then all Boolean functions $g_{\mathbf{c}}$, $\mathbf{c}\in \F_2^{k-1}$,
are semibent.
\end{itemize}
\end{proposition}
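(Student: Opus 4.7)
The plan rests on a single algebraic decomposition that expresses $\zeta_{2^k}^{f(\xx)}$ as a $\Q(\zeta_{2^k})$-linear combination of the signs $(-1)^{g_{\mathbf{c}}(\xx)}$. Applying the elementary identity $z^a=\tfrac{1+z}{2}+\tfrac{1-z}{2}(-1)^a$ (valid for $a\in\F_2$) to each factor of
\[
\zeta_{2^k}^{f(\xx)}=\zeta_{2^k}^{a_1(\xx)}\,\zeta_{2^{k-1}}^{a_2(\xx)}\cdots\zeta_{4}^{a_{k-1}(\xx)}(-1)^{a_k(\xx)}
\]
and distributing the product, one obtains $\zeta_{2^k}^{f(\xx)}=\sum_{\mathbf{c}\in\F_2^{k-1}}\gamma_{\mathbf{c}}(-1)^{g_{\mathbf{c}}(\xx)}$ with $\gamma_{\mathbf{c}}=\prod_{j=1}^{k-1}\tfrac{1+(-1)^{c_j}\zeta_{2^{k-j+1}}}{2}\in\Q(\zeta_{2^k})$. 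Summing against $(-1)^{\uu\cdot\xx}$ yields the master identity
\[
\mathcal{H}_f(\uu)=\sum_{\mathbf{c}\in\F_2^{k-1}}\gamma_{\mathbf{c}}\,\mathcal{W}_{g_{\mathbf{c}}}(\uu),
\]
which converts every statement about the gbent property into a linear condition on the integer Walsh spectra $\mathcal{W}_{g_{\mathbf{c}}}$. A short computation shows that in the power basis $\{\zeta_{2^k}^j:0\le j<2^{k-1}\}$ of $\Q(\zeta_{2^k})$ the coefficients $(\gamma_{\mathbf{c}})$ assemble into a normalized Walsh--Hadamard matrix; in particular they are $\Q$-linearly independent, so each $\mathcal{W}_{g_{\mathbf{c}}}(\uu)$ is uniquely recoverable from $\mathcal{H}_f(\uu)$.

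For part $(i)$, I would combine this with the regularity statement $\mathcal{H}_f(\uu)=2^{n/2}\zeta_{2^k}^{f^*(\uu)}$ from the preceding proposition. Applying the same decomposition to $\zeta_{2^k}^{f^*(\uu)}$ gives $\mathcal{H}_f(\uu)=2^{n/2}\sum_{\mathbf{c}}\gamma_{\mathbf{c}}(-1)^{g^*_{\mathbf{c}}(\uu)}$, and matching $\gamma_{\mathbf{c}}$-coefficients against the master identity forces $\mathcal{W}_{g_{\mathbf{c}}}(\uu)=\pm 2^{n/2}$ for every $\mathbf{c}$, so each $g_{\mathbf{c}}$ is bent (with dual $g^*_{\mathbf{c}}$). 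As a sanity check, the base case $k=2$ reduces to $\mathcal{H}_f(\uu)=\tfrac{1+i}{2}\mathcal{W}_{g_{\mathbf{0}}}(\uu)+\tfrac{1-i}{2}\mathcal{W}_{g_{\mathbf{1}}}(\uu)$, whose only integer solutions of modulus $2^{n/2}$ are $\mathcal{W}_{g_{\mathbf{c}}}(\uu)=\pm 2^{n/2}$.

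For part $(ii)$ the master identity still applies, but $2^{n/2}\notin\Z$ forces one to carry $\sqrt{2}$ through the cyclotomic expansion. For $k=2$, the preceding proposition constrains $\mathcal{H}_f(\uu)\in 2^{(n-1)/2}\{\pm 1\pm i\}$; equating real and imaginary parts of the master identity directly gives $\mathcal{W}_{g_{\mathbf{c}}}(\uu)\in\{0,\pm 2^{(n+1)/2}\}$, which is semibentness. For $k\in\{3,4\}$ one uses $\sqrt{2}=\zeta_{2^k}^{2^{k-3}}-\zeta_{2^k}^{3\cdot 2^{k-3}}\in\Q(\zeta_{2^k})$, so $\mathcal{H}_f(\uu)=2^{(n-1)/2}\bigl(\zeta_{2^k}^{f^*(\uu)+2^{k-3}}-\zeta_{2^k}^{f^*(\uu)+3\cdot 2^{k-3}}\bigr)$ has, in the power basis, exactly two nonzero coordinates, both $\pm 2^{(n-1)/2}$ (distinct because the exponents differ by $2^{k-2}\not\equiv 0\pmod{2^{k-1}}$). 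Inverting the Walsh--Hadamard change of basis then writes each $\mathcal{W}_{g_{\mathbf{c}}}(\uu)$ as a signed sum of two equal-magnitude terms $\pm 2^{(n-1)/2}$, landing in $\{0,\pm 2^{(n+1)/2}\}$.

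The main obstacle is verifying the Hadamard/$\Q$-basis claim for the $\gamma_{\mathbf{c}}$ at general $k$: it is a clean Vandermonde-type computation, but it is the step where the cyclotomic structure enters essentially and where the parity of $n$ interacts with $k$. A secondary subtlety is the $k=2$ versus $k\ge 3$ dichotomy in part $(ii)$: because $\sqrt{2}\notin\Q(\zeta_4)$, the master identity for $k=2$ cannot be unpacked uniformly and must be handled via the special value set $2^{(n-1)/2}\{\pm 1\pm i\}$ of $\mathcal{H}_f$, whereas for $k\ge 3$ the identity $\sqrt{2}=\zeta_{2^k}^{2^{k-3}}-\zeta_{2^k}^{3\cdot 2^{k-3}}$ lets the argument proceed uniformly through the Hadamard change of basis.
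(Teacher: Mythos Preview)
Your proposal is correct and matches the paper's own approach. Although the proposition is stated in the preliminaries with external citations rather than proved there, the paper re-derives (and extends) it in Theorems~\ref{ghyperbent} and~\ref{sbent} using exactly your ingredients: the product decomposition $\zeta_{2^k}^{f(\xx)}=\sum_{\mathbf{c}}\gamma_{\mathbf{c}}(-1)^{g_{\mathbf{c}}(\xx)}$ (their~\eqref{eq:extension}--\eqref{eq:decompositionzetak}), the Hadamard structure of the $\gamma_{\mathbf{c}}$ against the power basis of $\Q(\zeta_{2^k})$ (their~\eqref{eq:relationGWaGray}) followed by inversion, and for odd $n$ the identity $\sqrt{2}=\zeta_{2^k}^{2^{k-3}}-\zeta_{2^k}^{3\cdot 2^{k-3}}$ producing two nonzero basis coordinates that invert to values in $\{0,\pm 2^{(n+1)/2}\}$.
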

\begin{remark}
When $n$ is even, then $a_1(\xx) + 2a_2(\xx) \in \mathcal{GB}_n^4$ is gbent if and only if $a_1$ and $a_1\+a_2$ are bent $($see~\textup{\cite{ST09}}$)$.
Sufficient conditions on the gbentness of $f\in\mathcal{GB}_n^{2^k}$ are also known for $k=2$ when $n$ is odd, and in general for $k=3,4$
$($see~\textup{\cite{mms0, ST09, sgcgm}}$)$.
\end{remark}
Another result about the decomposition of gbent functions is the following theorem of ~\cite{mms0}.
\begin{theorem}[\textup{\cite[Theorem 20]{mms0}}]
\label{k,k-1Thm}
Let $f\in\mathcal{GB}_n^{2^k}$ with $f(\xx)=g(\xx)+2h(\xx)$, $g\in\cB_n$, $h\in\cGB_n^{2^{k-1}}$.
If $n$ is even, then the following statements are equivalent.
\begin{itemize}
\item[$(i)$] $f$ is gbent in $\cGB_n^{2^{k}}$;
\item[$(ii)$] $h$ and $h+2^{k-2}g$ are both gbent in $\cGB_n^{2^{k-1}}$ with $\cH_{h+2^{k-2}g}(\uu)=\pm \cH_{h}(\uu)$
for all $\uu\in \V_n$.
\end{itemize}
If $n$ is odd, then $(ii)$ implies $(i)$.
\end{theorem}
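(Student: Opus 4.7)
The plan is to reduce the theorem to a single identity expressing $\cH_f(\uu)$ as a fixed $\Q(\zeta_{2^k})$-linear combination of $\cH_h(\uu)$ and $\cH_{h+2^{k-2}g}(\uu)$, and then to treat the two implications separately. Using $\zeta_{2^k}^{g(\xx)+2h(\xx)} = \zeta_{2^k}^{g(\xx)}\zeta_{2^{k-1}}^{h(\xx)}$, the Boolean identity $\zeta_{2^k}^{g(\xx)} = \frac{1+\zeta_{2^k}}{2} + \frac{1-\zeta_{2^k}}{2}(-1)^{g(\xx)}$, and $(-1)^{g(\xx)} = \zeta_{2^{k-1}}^{2^{k-2}g(\xx)}$ (valid for $k\ge 2$), I would first establish
\begin{equation}
\label{eq:keyformula}
\cH_f(\uu) \;=\; \frac{1+\zeta_{2^k}}{2}\,\cH_h(\uu) \;+\; \frac{1-\zeta_{2^k}}{2}\,\cH_{h+2^{k-2}g}(\uu).
\end{equation}

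For $(ii)\Rightarrow (i)$, substituting $\cH_{h+2^{k-2}g}(\uu) = \epsilon_{\uu}\,\cH_h(\uu)$ with $\epsilon_{\uu}\in\{+1,-1\}$ into \eqref{eq:keyformula} collapses the coefficient of $\cH_h(\uu)$ to either $1$ (when $\epsilon_{\uu}=+1$) or $\zeta_{2^k}$ (when $\epsilon_{\uu}=-1$). In either case $|\cH_f(\uu)| = |\cH_h(\uu)| = 2^{n/2}$, so $f$ is gbent. This step is insensitive to the parity of $n$, which is why the odd-$n$ direction in the theorem requires nothing further.

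For $(i)\Rightarrow(ii)$ with $n$ even, I would invoke Proposition 2.2 to write $\cH_f(\uu) = 2^{n/2}\zeta_{2^k}^{f^*(\uu)}$ for the dual $f^*$. The key tool is the Galois automorphism $\sigma$ generating $\mathrm{Gal}(\Q(\zeta_{2^k})/\Q(\zeta_{2^{k-1}}))$, defined by $\sigma(\zeta_{2^k}) = -\zeta_{2^k}$. Since $\cH_h(\uu)$ and $\cH_{h+2^{k-2}g}(\uu)$ lie in $\Q(\zeta_{2^{k-1}})$ (and hence are fixed by $\sigma$), and since $2^{n/2}\in\Q$ (using $n$ even), applying $\sigma$ to \eqref{eq:keyformula} produces a second linear relation whose right-hand side becomes $2^{n/2}(-1)^{f^*(\uu)}\zeta_{2^k}^{f^*(\uu)}$. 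The resulting $2\times 2$ system in $\cH_h(\uu),\cH_{h+2^{k-2}g}(\uu)$ is non-singular, and solving it after splitting on the parity of $f^*(\uu)$ yields $\cH_h(\uu) = \cH_{h+2^{k-2}g}(\uu) = 2^{n/2}\zeta_{2^k}^{f^*(\uu)}$ when $f^*(\uu)$ is even, and $\cH_h(\uu) = -\cH_{h+2^{k-2}g}(\uu) = 2^{n/2}\zeta_{2^k}^{f^*(\uu)-1}$ when $f^*(\uu)$ is odd. In both cases the exponent of $\zeta_{2^k}$ is even, confirming that these values genuinely lie in $\Q(\zeta_{2^{k-1}})$, and $(ii)$ follows.

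The main obstacle is the Galois-descent step: it crucially requires $2^{n/2}$ to be fixed by $\sigma$, which is automatic when $n$ is even but fails in general for $n$ odd (for instance, $\sigma(\sqrt 2) = -\sqrt 2$ in the case $k=3$), and is compounded at $k=2$ by the exception in Proposition 2.2. This asymmetry explains why the reverse implication is only asserted in the even-$n$ case.
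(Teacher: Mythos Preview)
The paper does not prove this theorem; it is quoted from \cite{mms0} and used as a black box (cf.\ Remark~\ref{r4odd}). So there is no in-paper proof to compare against directly.

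Your argument is correct. The identity $\cH_f(\uu)=\tfrac{1+\zeta_{2^k}}{2}\cH_h(\uu)+\tfrac{1-\zeta_{2^k}}{2}\cH_{h+2^{k-2}g}(\uu)$ is the right starting point, the direction $(ii)\Rightarrow(i)$ is immediate from it, and the Galois-descent step for $(i)\Rightarrow(ii)$ with $n$ even is clean: $\cH_h(\uu)$ and $\cH_{h+2^{k-2}g}(\uu)$ lie in $\Q(\zeta_{2^{k-1}})$, so applying $\sigma\colon\zeta_{2^k}\mapsto-\zeta_{2^k}$ and solving the resulting $2\times 2$ system works exactly as you describe, and the parity split on $f^*(\uu)$ gives the claimed values.

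For context, the paper's own treatment of the closely related Proposition~\ref{thm-tgen} and Corollary~\ref{iff-cor} is combinatorial rather than Galois-theoretic: it invokes Lemma~\ref{valdis} for even $n$ (and its odd-$n$ analogue Proposition~\ref{GB-con}), which characterize gbentness via the preimage counts $|B_\uu(\rho)|=|\{\xx:f_\uu(\xx)=\rho\}|$, and then assembles $\cH_h(\uu)$ and $\cH_{h+2^{k-2}g}(\uu)$ by summing over these. Your automorphism argument sidesteps that bookkeeping entirely and is more conceptual. A small bonus you did not claim: since $\sqrt 2\in\Q(\zeta_8)\subset\Q(\zeta_{2^{k-1}})$ whenever $k\ge 4$, $\sigma$ fixes $2^{n/2}$ there even for odd $n$, so your descent argument actually yields $(i)\Rightarrow(ii)$ for odd $n$ and $k\ge 4$ with no change; only $k=3$ genuinely resists it, which is where the paper's combinatorial Proposition~\ref{GB-con} earns its keep.
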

%
%
\begin{remark}
\label{r4odd}
In the proof of~\textup{\cite[Theorem 20]{mms0}} it is moreover shown that if $h$ and $h+2^{k-2}$ are gbent, then $f$ is gbent if and only if
$\cH_{h+2^{k-2}g}(\uu)=\pm \cH_{h}(\uu)$ for all $\uu\in \V_n$.
As one of our achievements here, in our Corollary~$\ref{iff-cor}$ we will show that $(i)$ and $(ii)$ in Proposition~$\ref{k,k-1Thm}$ are equivalent also when $n$ is odd.
\end{remark}

\section{Generalized hyperbent functions}
\label{hyper}

Let $f$ be a Boolean function from $\F_{2^n}$ to $\F_2$, and let $1\leq i\leq n$ be an integer with $\gcd(2^n-1,i) = 1$.
The {\it extended Walsh-Hadamard transform} $\cW_{f,i}$ is the integer valued function
\[ \cW_{f,i}(u)=\sum_{x\in \F_{2^n}}(-1)^{f(x)}(-1)^{\Tr(u x^i)}. \]
Recall that $f$ is called {\it hyperbent} if $|\cW_{f,i}(u)|=2^{n/2}$, for all $1\leq i\leq n$ with $\gcd(2^n-1,i) = 1$.
For background on hyperbent functions we refer to the articles \cite{Car06,cg,yg}. 

In this section we introduce the concept of hyperbent functions for generalized Boolean functions, and show the generalized
hyperbentness for a class of gbent functions presented in \cite{mms}. For a function $f\in\mathcal{GB}_n^{2^k}$ and an integer
$1\leq i\leq n$ with $\gcd(2^n-1,i) = 1$, we define the {\it extended generalized Walsh-Hadamard transform} $\cH^{(2^k)}_{f,i}$
as a natural extension of $\cW_{f,i}$ as
\[ \cH_{f,i}^{(2^k)}(u)=\sum_{x\in \F_{2^n}}\zeta_q^{f(x)}(-1)^{\Tr(u x^i)}, \]
and call $f$ a {\it generalized hyperbent} ({\it g-hyperbent}) function if $|\cH^{(2^k)}_{f,i}(u)|=2^{n/2}$, for all $1\leq i\leq n$
with $\gcd(2^n-1,i) = 1$.
%

In~\cite{Car06} Carlet and Gaborit proved that all functions in the class of ${PS}_{ap}$ are hyperbent.
We proceed similarly for a class of gbent functions from $\mathcal{GB}_{2n}^{2^k}$ presented in \cite{mms}, which can be seen as
a function in a generalized ${PS}_{ap}$ class. We recall the functions in the next proposition.
We use the convention that $\frac{y'}{y} = 0$ if $y = 0$.
\begin{proposition}[\textup{\cite[Theorem 1]{mms}}]
\label{exPSk}
Let $g_j:\F_{2^n}\rightarrow\F_2$, $0\leq j\leq k-1$, be
Boolean functions with $g_j(0)=0$ and $\displaystyle\sum_{t\in\F_{2^n}}\zeta^{\sum_{j=0}^{k-1} 2^j g_j(t)}=0$.
Then the function
$f:\F_{2^n}\times\F_{2^n}\rightarrow\Z_{2^k}$ given by
$ \displaystyle f(y',y) = \sum_{j=0}^{k-1} 2^j g_j\left({y'}/{y}\right)$
is a gbent function with the dual
$\displaystyle f^*(y',y) = \sum_{j=0}^{k-1} 2^j g_j\left({y}/{y'}\right)$.
\end{proposition}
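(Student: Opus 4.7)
The plan is to compute $\cH_f(u',u)$ directly from the definition, splitting the double sum over $\F_{2^n}\times\F_{2^n}$ according to whether $y=0$ or $y\neq 0$. Write $G(t)=\sum_{j=0}^{k-1}2^j g_j(t)$, so that $f(y',y)=G(y'/y)$. The hypothesis $g_j(0)=0$ for every $j$, together with the convention $y'/0:=0$, forces $f(y',0)=G(0)=0$, so the $y=0$ slice of $\cH_f(u',u)$ collapses to $\sum_{y'\in\F_{2^n}}(-1)^{\Tr(u'y')}$, which equals $2^n$ when $u'=0$ and $0$ otherwise.

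For the $y\neq 0$ slice I would substitute $t=y'/y$; for each fixed $y\neq 0$ this is a bijection of $\F_{2^n}$, and it decouples $\zeta^{G(t)}$ from the character. The resulting double sum factors as
\[
\sum_{t\in\F_{2^n}}\zeta^{G(t)}\sum_{y\neq 0}(-1)^{\Tr((u't+u)y)},
\]
in which the inner sum equals $2^n$ when $u't+u=0$ and $-1$ otherwise. The constant $-1$ piece aggregates to $-\sum_t\zeta^{G(t)}=0$ by the balance hypothesis, so the $y\neq 0$ contribution reduces cleanly to $2^n\sum_{t:\,u't=u}\zeta^{G(t)}$. Assembling both pieces yields the master formula
\[
\cH_f(u',u) \;=\; 2^n\,\delta_{u',0} \;+\; 2^n\sum_{t:\,u't=u}\zeta^{G(t)}.
\]

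A three-case reading then finishes the argument. If $u'\neq 0$, the equation $u't=u$ has the unique root $t=u/u'$ and $\cH_f(u',u)=2^n\zeta^{G(u/u')}$. If $u'=0=u$, the first term contributes $2^n$ while the second, now indexed by all of $\F_{2^n}$, vanishes by the balance hypothesis, giving $\cH_f(0,0)=2^n=2^n\zeta^{G(0)}$. If $u'=0$ but $u\neq 0$, the solution set of $u't=u$ is empty, so $\cH_f(0,u)=2^n=2^n\zeta^{G(u/0)}$, consistent with the $u/0:=0$ convention. In each case $|\cH_f(u',u)|=2^n$, so $f$ is gbent and the transform values themselves read off the dual as $f^*(y',y)=G(y/y')=\sum_{j=0}^{k-1}2^j g_j(y/y')$. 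The only real subtlety is the bookkeeping around the $y=0$ slice and the $0/0$ convention: these are precisely what calibrate the $u'=0$ rows of $\cH_f$ so that they have modulus $2^n$ and agree with $\zeta^{f^*}$, while the balance hypothesis is used exactly in the two symmetric places (killing the residual $-\sum_t\zeta^{G(t)}$ from extending $\sum_{y\neq 0}$ to a full sum, and handling the diagonal entry $(u',u)=(0,0)$) where it needs to appear.
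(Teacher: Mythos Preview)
Your argument is correct. Note that the paper does not itself prove Proposition~\ref{exPSk}; it is quoted from \cite{mms}. The closest analogue here is the proof of the subsequent theorem (that $f$ is g-hyperbent), which proceeds differently: it identifies $\F_{2^n}\times\F_{2^n}$ with the quadratic extension via $x=y'+\omega y$, computes $\cH_{f,i}^{(q)}$ with respect to the trace form on that extension, and relies on the self-orthogonality relation~\eqref{indi} together with the Carlet--Gaborit lemma that pins down the unique $z$ with $a(z+\omega)^i$ in the subfield.

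Your route is more elementary for plain gbentness: working directly on $\F_{2^n}\times\F_{2^n}$ with the product trace form, the substitution $t=y'/y$ (the same change of variable as the paper's $z$) and the balance hypothesis already suffice, with no appeal to extension-field orthogonality or to the Carlet--Gaborit lemma. You also read off the dual $f^*$ explicitly, which the hyperbent computation does not. What the paper's framework buys is access to $\cH_{f,i}$ for every $i$ coprime to $2^n-1$, something a product-form inner product cannot express. One small wording slip: the inner sum over $y\ne 0$ equals $2^n-1$, not $2^n$, when $u't+u=0$; your master formula is nonetheless correct, since you are in effect writing that sum as $2^n\cdot[u't+u=0]-1$ and then killing the constant $-1$ piece via the balance hypothesis.
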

To show that these functions are g-hyperbent, we start with some preliminary considerations.
Let $\o$ be any element in $\F_{2^n} \setminus \F_{2^{n/2}}$, then $\F_{2^n} = \F_{2^{n/2}} + \o \F_{2^{n/2}}$. Furthermore,
every $y \in \F_{2^{n/2}}$ satisfies $y^{2^{n/2}} = y$, therefore ${\T(y) = 0}$ for $y\in\F_{2^{n/2}}$.
With the inner product on $\F_{2^n}$ defined by $\langle{y,y'}\rangle = \T(y y')$, the subspace
$\F_{2^{n/2}}$ is orthogonal to itself. Therefore,
\begin{equation}
\label{indi}
\sum_{y\in\F_{2^{n/2}}} (-1)^{\T(\lambda y)} =
\begin{cases}
0 & \text{if }\lambda \notin \F_{2^{n/2}} \\
2^{n/2} & \text{if } \lambda \in \F_{2^{n/2}}
\end{cases}
 = 2^{n/2}\,{\bf 1}_{\F_{2^{n/2}}}(\lambda).
\end{equation}
\begin{theorem}
The function $f$ in Proposition~\textup{\ref{exPSk}} is g-hyperbent.
\end{theorem}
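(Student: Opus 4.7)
The plan is to adapt Carlet and Gaborit's proof that classical $PS_{ap}$ functions are hyperbent to this generalized setting, using the decomposition $\F_{2^{2n}}=\F_{2^n}+\o\F_{2^n}$ together with the balance hypothesis from Proposition~\ref{exPSk}. I would first split the sum defining $\cH_{f,i}^{(2^k)}(u)$ by writing $x=y'+\o y$ and separating the contributions from $y=0$ and $y\neq 0$.

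For $y=0$ we have $f(y',0)=0$ by the convention $y'/y=0$, and since $\gcd(i,2^{2n}-1)=1$ forces $\gcd(i,2^n-1)=1$, the map $y'\mapsto (y')^i$ is a bijection of $\F_{2^n}$; the analogue of \eqref{indi} for the extension $\F_{2^{2n}}/\F_{2^n}$ reduces this slice to $2^n\,{\bf 1}_{\F_{2^n}}(u)$. For $y\neq 0$, substituting $y'=yt$ gives $y'+\o y=y(t+\o)$ and $f(y',y)=h(t)$ with $h(t)=\sum_{j=0}^{k-1}2^j g_j(t)$; pulling the $t$-sum outside, replacing $y^i$ by $y$ on $\F_{2^n}^*$ (permissible since $\gcd(i,2^n-1)=1$), and invoking \eqref{indi} once more yields $\sum_{t}\zeta^{h(t)}\bigl(2^n\,{\bf 1}_{\F_{2^n}}(u(t+\o)^i)-1\bigr)$. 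The $-1$ contribution is killed by the hypothesis $\sum_t\zeta^{h(t)}=0$, so in total
\[
\cH_{f,i}^{(2^k)}(u) = 2^n\,{\bf 1}_{\F_{2^n}}(u) + 2^n\!\!\sum_{\substack{t\in\F_{2^n}\\ u(t+\o)^i\in\F_{2^n}}}\!\!\zeta^{h(t)}.
\]

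The decisive step is then to count $\{t\in\F_{2^n}: u(t+\o)^i\in\F_{2^n}\}$. Because $\gcd(i,2^{2n}-1)=1$, the map $x\mapsto x^i$ preserves the order of every $x\in\F_{2^{2n}}^*$ and is equivariant for the $\F_{2^n}^*$-action, so it induces a bijection on the $2^n+1$ cosets $\F_{2^{2n}}^*/\F_{2^n}^*$ that fixes the distinguished coset $\F_{2^n}^*$. I then distinguish three cases. For $u=0$ the second sum equals $\sum_t\zeta^{h(t)}=0$ and ${\bf 1}_{\F_{2^n}}(0)=1$, giving $\cH_{f,i}^{(2^k)}(0)=2^n$. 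For $u\in\F_{2^n}^*$ the condition $u(t+\o)^i\in\F_{2^n}$ reduces to $(t+\o)^i\in\F_{2^n}^*$, which never holds because $(t+\o)^i$ has the same nontrivial order as $t+\o$; the second sum is empty and again $|\cH_{f,i}^{(2^k)}(u)|=2^n$. For $u\notin\F_{2^n}$ the target coset $u^{-1}\F_{2^n}^*$ is nontrivial and hence has a unique preimage $(t_0+\o)\F_{2^n}^*$ under the induced bijection, so exactly one $t_0\in\F_{2^n}$ contributes and the second sum collapses to the single unit-modulus term $\zeta^{h(t_0)}$, yielding $|\cH_{f,i}^{(2^k)}(u)|=2^n$.

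The main obstacle is the coset-bijection argument in the last paragraph: one must verify that $x\mapsto x^i$ descends cleanly to $\F_{2^{2n}}^*/\F_{2^n}^*$ and that it cannot carry any nontrivial coset onto $\F_{2^n}^*$. Both points follow from $\gcd(i,2^{2n}-1)=1$ and the order-preserving character of the power map, but they are exactly what converts the identity above into the uniform bound $|\cH_{f,i}^{(2^k)}(u)|=2^n$ for every $u\in\F_{2^{2n}}$ and thus establishes g-hyperbentness.
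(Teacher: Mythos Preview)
Your proposal is correct and follows essentially the same route as the paper: both split $x=y'+\o y$, separate $y=0$ from $y\ne 0$, substitute $t=y'/y$, kill the $-1$ term via the balance hypothesis $\sum_t\zeta^{h(t)}=0$, and reduce to counting those $t$ with $u(t+\o)^i$ in the subfield. The only real difference is that the paper outsources the key uniqueness statement to \cite[Lemma~1]{Car06}, whereas you supply a direct self-contained argument via the induced bijection on $\F_{2^{2n}}^*/\F_{2^n}^*$; your explicit three-way case split ($u=0$, $u\in\F_{2^n}^*$, $u\notin\F_{2^n}$) is also slightly cleaner than the paper's handling of the $u\in\F_{2^n}$ case.
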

\begin{proof}
We let $g(y'/y):=f(y',y)$. Analogous to Carlet and Gaborit's proof, for an integer $i$ coprime to $2^n-1$,
we write (using $x:=y'+\omega y$, $z:=\frac{y'}{y}$)
\begin{align*}
\cH_{f,i}^{(q)}(u) & = \sum_{x\in\F_{2^n}}\zeta^{f(x)}(-1)^{Tr(ax^i)}\\
&=\sum_{y,y'\in\F_{2^{n/2}}}\zeta^{g\left(\frac{y'}{y}\right)}(-1)^{\T(a(y'+\o y)^{i})}\\
&=\sum_{y\in\F_{2^{n/2}}^{*},y'\in\F_{2^{n/2}}}\zeta^{g\left(\frac{y'}{y}\right)}(-1)^{\T(ay^{i}(z+\o)^{i})}
+ \sum _{y'\in\F_{2^{n/2}}}\zeta^{g(0)\oplus\T(ay'^i)}.
\end{align*}
With $(\ref{indi})$ we obtain
\begin{align*}
\cH_{f,i}^{(q)}(u)
=&\sum_{z\in\F_{2^{n/2}}}\zeta^{g(z)}
\sum_{y\in\F_{2^{n/2}}^{*}}(-1)^{\T(a(z +\o)^{i}y^i)}+\zeta^{g(0)}2^{n/2}\cdot{\bf 1}_{\F_{2^{n/2}}}(a) \\
=&\sum_{z\in\F_{2^{n/2}}}\zeta^{g(z)}\sum_{y\in\F_{2^n/2}}(-1)^{\T(a(z+\o)^{i}y^i)}\\
&-\sum_{z\in\F_{2^{n/2}}}\zeta^{g(z)} + \zeta^{g(0)}2^{n/2}\cdot{\bf 1}_{\F_{2^{n/2}}}(a).
\end{align*}

Substituting $\displaystyle g(z) =\sum_{j=0}^{k-1}2^{j}g_j(z)$ we have:
\begin{align*}
&\sum_{z\in\F_{2^{n/2}}}\zeta^{\sum_{j=0}^{k-1}2^{j}g_j(z)}\sum_{y\in\F_{2^{n/2}}}(-1)^{\T(a(z+w)^{i}y^i)}\\
&-\sum_{z\in \F_{2^{n/2}}}\zeta^{\sum_{i=0}^{k-1}2^{j}g_j(z)} + \zeta^{g(0)}2^{n/2}\cdot{\bf 1}_{\F_{2^{n/2}}}(a).
\end{align*}

By \cite[Lemma 1]{Car06}, if $a\not\in\F_{2^n}$, then there exists a unique $z$ such that $a(z + \o)^i \in \F_{2^{n/2}}$,
which in turn means that $\T(a(z +\o)^{i}y^i) = 0$,
since $y^i \in \F_{2^{n/2}}$. Hence,
the first term $\sum_z\zeta^{\sum_{j=0}^{k-1} 2^{j}g_j(z)} \sum_y (-1)^{\T(a(z+w)^i y^i)}$  in the above expression equals $\zeta^{\rho}2^{n/2}$ (for some positive integer $\rho$), if $a\notin \F_{2^{n/2}}$
and zero otherwise. Moreover, the second term $\sum_z \zeta^{\sum_{j=0}^{k-1} 2^{j}g_j(z)}$ equals zero by definition, and as previously stated, the last term
$\zeta^{\sum_j 2^{j}g_j(0)\oplus {\T(ay'^i)}}$ equals $\zeta^{g(0)}2^{n/2}$, if $a \in \F_{2^{n/2}}$ and zero otherwise. Therefore, we see that the
entire previously displayed expression equals $\zeta^\rho 2^{n/2}$, for some integer $\rho$, regardless of
whether $a \in \F_{2^{n/2}}$ or $a \notin \F_{2^{n/2}}$ and therefore, $f$ is g-hyperbent.
\end{proof}

More generally, one can generalize a classical construction of Boolean
hyperbent functions as follows.  We have the multiplicative
decomposition $\GF{n}^\star=\GF{m}^\star \times U$ where $U$ is a
cyclic subgroup of $\GF{n}^\star$ of order $2^m+1$, $m=\frac n2$.  Let
$\fun f{\VV n}{\ZZ{2^k}}$ be such that $f$ is constant on each coset
$a\GF m^\star$ for any $a\in U$.  Then

\begin{theorem}
  Let $k\geq 3$. Then, $f$ is g-hyperbent if and only if
  $\sum_{u\in U}\zeta_{2^k}^{f(u)}=\zeta_{2^k}^{f(0)}$.
\end{theorem}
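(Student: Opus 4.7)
The plan is to compute $\cH_{f,i}^{(2^k)}(b)$ in closed form by exploiting that $f$ is constant on each coset $u\F_{2^m}^\star$, $u\in U$, and then read off both implications. Splitting the Walsh sum at $x=0$ and over the disjoint union $\F_{2^n}^\star=\bigsqcup_{u\in U}u\F_{2^m}^\star$, using that $\gcd(i,2^m-1)=1$ (since $2^m-1\mid 2^n-1$) makes $y\mapsto y^i$ a permutation of $\F_{2^m}^\star$, and invoking transitivity of trace, the inner sum becomes $\sum_{y\in\F_{2^m}^\star}(-1)^{\T_{m/1}(y\,\T_{n/m}(bu^i))}$, equal to $2^m-1$ when $\T_{n/m}(bu^i)=0$ and $-1$ otherwise.

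Next, since $u^{2^m}=u^{-1}$ on $U$, the equation $\T_{n/m}(bu^i)=bu^i+b^{2^m}u^{-i}=0$ reduces to $u^{2i}=b^{2^m-1}$. For $b\neq 0$ we have $b^{2^m-1}\in U$ (as $(b^{2^m-1})^{2^m+1}=b^{2^n-1}=1$) and $u\mapsto u^{2i}$ is a bijection of $U$ (since $\gcd(2i,2^m+1)=1$), producing a unique $u_{b,i}\in U$. Writing $S:=\sum_{u\in U}\zeta_{2^k}^{f(u)}$, this yields
\[
\cH_{f,i}^{(2^k)}(0)=\zeta_{2^k}^{f(0)}+(2^m-1)S,\qquad \cH_{f,i}^{(2^k)}(b)=2^m\zeta_{2^k}^{f(u_{b,i})}+\zeta_{2^k}^{f(0)}-S \text{ for } b\neq 0.
\]
Sufficiency follows at once: if $S=\zeta_{2^k}^{f(0)}$, both right-hand sides equal $2^m$ times a $2^k$-th root of unity.

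For necessity, set $Q:=S-\zeta_{2^k}^{f(0)}$ and assume $f$ is g-hyperbent. Taking $i=1$ and letting $b$ range over $\F_{2^n}^\star$, the map $b\mapsto u_{b,1}$ surjects onto $U$ (the map $b\mapsto b^{2^m-1}$ has kernel $\F_{2^m}^\star$ and image of size $2^m+1=|U|$), so for every $u\in U$ the point $2^m\zeta_{2^k}^{f(u)}$ lies on both circles $|z|=2^m$ and $|z-Q|=2^m$. If $Q\neq 0$, these two circles meet in at most two points that sum to $Q$, so $\zeta_{2^k}^{f(u)}$ attains at most two values on $U$.

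The main obstacle will be ruling out both of these scenarios under $Q\neq 0$. If $\zeta_{2^k}^{f(u)}\equiv c$ is constant on $U$, the circle condition collapses to $|\zeta_{2^k}^{f(0)}-c|=2^m$, impossible for $m\geq 2$ because the left side is at most $2$. If the value set is $\{\alpha,\beta\}$ with multiplicities $n_\alpha,n_\beta\geq 1$ summing to $2^m+1$, then $Q=2^m(\alpha+\beta)$ and the $b=0$ equality $|(2^m-1)S+\zeta_{2^k}^{f(0)}|=2^m$ simplifies, after substituting $S=n_\alpha\alpha+n_\beta\beta$, to $|(n_\alpha-1)\alpha+(n_\beta-1)\beta|=1$. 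Writing $\beta/\alpha=\zeta_{2^k}^j$ with $j\notin\{0,2^{k-1}\}$, expanding the modulus, and applying the sharp bound $(n_\alpha-1)(n_\beta-1)\leq 2^{m-1}(2^{m-1}-1)$ for nonnegative integers of sum $2^m-1$ forces $\sin^2(\pi j/2^k)\geq 1$, hence $j=2^{k-1}$, giving $\alpha+\beta=0$ and thus $Q=0$, contradicting $Q\neq 0$. Therefore $Q=0$, i.e., $S=\zeta_{2^k}^{f(0)}$.
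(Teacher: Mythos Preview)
Your derivation of the closed form for $\cH_{f,i}^{(2^k)}$ coincides with the paper's, and sufficiency follows identically. For the converse, however, you take a genuinely different route. The paper argues algebraically in the cyclotomic field $\Q(\zeta_{2^k})$: it expands the identities $\zeta_{2^k}^{f(0)}-S+2^m\zeta_{2^k}^{f(u_a)}=2^m\zeta_{2^k}^{\rho}$ and $\zeta_{2^k}^{f(0)}-S+2^mS=2^m\zeta_{2^k}^{\phi}$ in the $\Q$-basis $\{1,\zeta_{2^k},\ldots,\zeta_{2^k}^{2^{k-1}-1}\}$, reads off coefficient identities for the signed multiplicities $N_r$, and after some manipulation forces $\zeta_{2^k}^{f(u_a)}=\zeta_{2^k}^{\rho}$ and $\zeta_{2^k}^{f(0)}=\zeta_{2^k}^{\phi}$, whence $S=\zeta_{2^k}^{f(0)}$. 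Your argument is instead geometric and arithmetic: the two-circle observation bounds the value set of $\zeta_{2^k}^{f(u)}$ on $U$ to at most two points summing to $Q/2^m$, and then the equation at $b=0$ combined with the sharp bound $(n_\alpha-1)(n_\beta-1)\le 2^{m-1}(2^{m-1}-1)$ forces $\sin^2(\pi j/2^k)\ge 1$, hence $\beta=-\alpha$ and $Q=0$.

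Your approach is more elementary---it never invokes linear independence of roots of unity---and makes explicit where $m\ge 2$ is used. In fact your caveat is warranted: for $m=1$ the statement fails (take $n=2$, $f(0)=0$, $f(u)=2^{k-1}$ on $U=\F_4^\star$; then $S=-3\ne 1$ yet $|\cH_f(b)|=2$ for all $b$), a degenerate case neither proof covers. Conversely, the paper's cyclotomic-basis method is more structural and dovetails with the regularity arguments elsewhere in the paper, though as written it tacitly assumes the residues $\rr(\rho),\rr(f(0)),\rr(f(u_a))$ are pairwise distinct.
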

\begin{proof}
  \begin{align*}
    \GWa{2^k}{f}(a) 
    &= \sum_{x\in\GF n}\zetak^{f(x)}(-1)^{\Tr(ax^i)}\\
    &= \zetak^{f(0)} + \sum_{u\in U} \zetak^{f(u)}\sum_{y\in\GF{m}^\star}(-1)^{\Tr(au^iy^i)}\\
    &= \zetak^{f(0)} - \sum_{u\in U} \zetak^{f(u)} + \sum_{u\in U} \zetak^{f(u)}\sum_{y\in\GF{m}}(-1)^{\T_m\big(\T_m^n(au^i)y^i\big)}.
  \end{align*}
  Now,
  \begin{align*}
    \sum_{y\in\GF{m}}(-1)^{\T_m\big(\T_m^n(au^i)y^i\big)} = \sum_{y\in\GF{m}}(-1)^{\T_m\big(\T_m^n(au^i)y\big)},
  \end{align*}
  since $\gcd(i,2^m-1)=\gcd(i,2^n-1)=1$. Observe that the equation
  $\T_m^n(au^i)=au^i+a^{2^m}u^{-i}=0$ has a unique solution $u_a$
  in $U$ for every $a\not=0$. Thus, if $a\not=0$,
  \begin{align*}
    \GWa{2^k}{f}(a) 
    &= \zetak^{f(0)} - \sum_{u\in U} \zetak^{f(u)} + 2^m\zetak^{f(u_a)}.
  \end{align*}
  On the other hand,
  \begin{align}
  \label{eq:!=0}
    \GWa{2^k}{f}(0) 
    &= \zetak^{f(0)} - \sum_{u\in U} \zetak^{f(u)} + 2^m\sum_{u\in U} \zetak^{f(u)}.
  \end{align}
  Suppose that $\sum_{u\in U}\zeta_{2^k}^{f(u)}=\zeta_{2^k}^{f(0)}$. Then
  \begin{align}
  \label{eq:=0}
    \GWa{2^k}{f}(a) = 2^m\zetak^{f(u_a)}\quad\text{and}\quad\GWa{2^k}{f}(0) = 2^m\zetak^{f(0)}.
  \end{align}
  Conversely, suppose that $f$ is g-hyperbent. Then, for $a\not=0$,
  \begin{align*}
    \zetak^{f(0)} - \sum_{u\in U} \zetak^{f(u)} + 2^m\zetak^{f(u_a)} = 2^m \zetak^{\rho}
  \end{align*}
  and
  \begin{align*}
    \zetak^{f(0)} - \sum_{u\in U} \zetak^{f(u)} + 2^m\sum_{u\in U} \zetak^{f(u)}= 2^m\zetak^{\phi},
  \end{align*}
  for some $\rho\in\ZZ {2^k}$ and $\phi\in\ZZ{2^k}$. Set
  $N_r^+=\vert\{u\in U\mid f(u)=r\}\vert$,
  $N_r^-=\vert\{u\in U\mid f(u)=r+2^{k-1}\}\vert$ and
  $N_r=N_r^+-N_r^-$ for $r\in\ZZ {2^{k-1}}$ and, for $e\in\ZZ{2^k}$,
  $e=\rr(e)+2^{k-1}s(e)$. Then, equation~(\ref{eq:!=0}) can be
  rewritten as
  \begin{align*}
    &\sum_{r\in\ZZ {2^{k-1}}\setminus\{\rr(\rho),\rr(f(0)),\rr(f(u_a))\}} N_r\zetak^r + \left(-N_{\rr(f(u_a))} +2^m(-1)^{s(f(u_a))}\right)\zetak^{\rr(f(u_a))}\\
    &\qquad+\left(-N_{\rr(f(0))}+(-1)^{s(f(0))}\right)\zetak^{\rr(f(0))}
      + \left(-N_{\rr(\rho)}-2^m(-1)^{s(\rho)}\right)\zetak^{\rr(\rho)}=0.
  \end{align*}
  Thus, since $\{\zetak^{\rho},\,0\leq \rho\leq2^{k-1}-1\}$ is a basis
  of $\QQ(\zetak)$,
  \begin{align*}
    &N_\rr = -N_{\rr(f(u_a))} +2^m(-1)^{s(f(u_a))} = -N_{\rr(f(0))}+(-1)^{s(f(0))} \\
    &\quad= -N_{\rr(\rho)}-2^m(-1)^{s(\rho)}=0,
  \end{align*}
  for every
  $r\in\ZZ {2^{k-1}}\setminus\{\rr(\rho),\rr(f(0)),\rr(f(u_a))\}$.
  Therefore
  \begin{align*}
    \sum_{u\in U}\zetak^{f(u)}&=\sum_{r\in\ZZ {2^{k-1}}}N_r\zetak^r \\ &=2^m(-1)^{s(f(u_a))}\zetak^{\rr(f(u_a))}+(-1)^{s(f(0))}\zetak^{\rr(f(0))}
    -2^m(-1)^{s(\rho)}\zetak^{\rr(\rho)}.
  \end{align*}
  Thus
  \begin{align*}
    &2^m(-1)^{s(f(0)}\zetak^{\rr(f(0))}-2^m(-1)^{s(\phi)}\zetak^{\rr(\phi)}\\
    &\quad + (2^n-2^m) \left((-1)^{s(fu_a)}\zetak^{f(u_a)}-(-1)^{s(\rho)}\zetak^{\rho}\right)=0.
  \end{align*}
  Therefore, $\zetak^{f(0)}=\zetak^{\phi}$ and
  $\zetak^{f(u_a)}=\zetak^{\rho}$ proving that
  $\sum_{u\in U}\zetak^{f(u)}=\zetak^{f(0)}$.
\end{proof}

\section{Decomposition of gbent and g-hyperbent functions}
\label{decomp}

Let $f\in \mathcal{GB}_n^{2^k}$ be a gbent function. In this section we continue analyzing the nature of Boolean and generalized
Boolean functions in $\mathcal{GB}_n^{2^{k^\prime}}$, $k^\prime < k$, of which the gbent function $f$ is (in some sense) composed.

Firstly, any function $\fun {f}{\GF n}{\ZZ {2^k}}$ can be uniquely
decomposed as
\begin{align*}
  f(x) = \sum_{j=0}^{k-1}2^j f_j,
\end{align*}
where the $f_j$'s are Boolean functions. It has been recalled in
Proposition \ref{gbebe} that, when $n$ is even, if $f$ is gbent then
all its ``components'' $f_j$ are bent functions. In fact, one can
extend the previous results to g-hyperbent functions. To this end, we
make some preliminary remarks that will help us in our analysis. Recall that when $\gcd(i,2^n-1)=1$, then the
extended Walsh transform of $f$ is
\begin{align*}
  \GWa[,i]{2^k}{f}(a) &= \sum_{x\in\GF n}\zetak^{f(x)}(-1)^{\Tr(ax^i)}\\
&= \sum_{x\in\GF n}\zetak^{f(x^j)}(-1)^{\Tr(ax)} = \GWa{2^k}{f(x^j)}(a)
\end{align*}
where $j$ is the inverse of $i$ in $\ZZ{2^n-1}$. Now, saying that $f$
is g-hyperbent is equivalent to say that $f(\xx^j)$ is g-bent for
every $j$ coprime with $2^n-1$. Thus, for $k\geq 3$,
\begin{align}
  \label{eq:regular}
  \GWa[,i]{2^k}{f}(a) = 2^{\frac n2}\zetak^\rho
\end{align}
for some $\rho\in\ZZ{2^k}$. Now, Observe that
\begin{align}\label{eq:extension}
  \zetak^{f(x)} 
  = \prod_{j=0}^{k-1} \zetak^{2^jf_j(x)}
  = \prod_{j=0}^{k-1} \left(\frac{1+\zetak^{2^j}}2+\frac{1-\zetak^{2^j}}2(-1)^{f_j(x)}\right).
\end{align}
Set
\begin{align*}
  Q(X_1,\dots,X_{k-1}) 
  &= \prod_{j=0}^{k-1} \left(\frac{1+\zetak^{2^j}}2+\frac{1-\zetak^{2^j}}2X_j\right)\\
  &= 2^{-k}\prod_{j=0}^{k-1} \sum_{c\in\F_2}\left(\left(1+\zetak^{2^j+c2^{k-1}}\right)X_j^c\right)\\
  &= 2^{-k}\sum_{c\in\F_2^{k}} \left(\prod_{j=0}^{k-1}\left(1+\zetak^{2^j+c_j2^{k-1}}\right)\right)\prod_{j=0}^{k-1}X_j^{c_j}.
\end{align*}
Set
$A_c=2^{-k}\prod_{j=0}^{k-1}\left(1+\zetak^{2^j+c_j2^{k-1}}\right)$. Then
\begin{align}
  \label{eq:decompositionzetak}
  \zetak^{f(x)} 
   = Q\left((-1)^{f_0(x)},\dots,(-1)^{f_{k-1}(x)}\right)
   = \sum_{c\in\F_ 2^{k}} A_c (-1)^{\sum_{j=0}^{k-1}c_jf_j(x)}.
\end{align}
Then, we have the folowing theorem.
\begin{theorem}
\label{ghyperbent}
Let $f:\V_n\rightarrow\Z_{2^k}$, $n$ even. Then $f$ is a g-hyperbent
function given as
$f(\xx) = a_1(\xx)+2a_2(\xx)+\cdots+ 2^{k-1}a_{k}(\xx)$ if and only
if, for each $\cc\in\F_2^{k-1}$, the Boolean function $f_\cc$ defined
as
\[ f_\cc(\xx) =
c_1a_1(\xx)\+c_2a_2(\xx)\+\cdots\+c_{k-1}a_{k-1}(\xx)\+a_{k}(\xx) \]
is a hyperbent function.
\end{theorem}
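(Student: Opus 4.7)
The plan is to reduce g-hyperbentness to gbentness by pulling back along coprime power maps and then invoke Proposition~\ref{gbebe}(i) componentwise. Concretely, for any $i$ with $\gcd(i,2^n-1)=1$, let $j$ be its multiplicative inverse modulo $2^n-1$, and set $F_i(x):=f(x^j)$. The substitution $x=y^i$ gives $\mathcal{H}^{(2^k)}_{F_i}(u)=\mathcal{H}^{(2^k)}_{f,i}(u)$, so $f$ is g-hyperbent precisely when every $F_i$ is gbent. The Boolean ``pieces'' transform the same way: the $\cc$-component of $F_i$ in the sense of Proposition~\ref{gbebe} equals $f_{\cc}(x^j)$, hence $f_{\cc}$ is hyperbent precisely when every $f_{\cc}(\cdot^j)$ is bent. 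With this translation, both directions become componentwise bent/gbent assertions about the pullbacks $F_i$.

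For the forward implication, I would fix such an $i$, note that $F_i$ is gbent by the pullback identity and g-hyperbentness of $f$, apply Proposition~\ref{gbebe}(i) to $F_i$ to conclude that each $f_{\cc}(x^j)$ is bent, and then vary $i$ to obtain that $f_{\cc}$ is hyperbent. This is essentially a transfer of Proposition~\ref{gbebe}(i) across the power map.

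For the reverse implication, I would again fix $i$ and, using hyperbentness of each $f_{\cc}$, observe that every $f_{\cc}(x^j)$ is bent; it then remains to deduce that $F_i$ is gbent. The natural tool is the cyclotomic expansion (\ref{eq:decompositionzetak}), combined with the observation that the factor $1+\zeta_{2^k}^{2^{k-1}}=0$ annihilates every coefficient $A_c$ with $c_{k-1}=0$. Thus
\[
\zeta_{2^k}^{f(x)}=\sum_{\cc\in\F_2^{k-1}} B_{\cc}\,(-1)^{f_{\cc}(x)},\qquad B_{\cc}\ne 0,
\]
so applying the extended Walsh transform gives $\mathcal{H}^{(2^k)}_{f,i}(u)=\sum_{\cc\in\F_2^{k-1}} B_{\cc}\,\mathcal{W}_{f_{\cc},i}(u)$. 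Substituting $\mathcal{W}_{f_{\cc},i}(u)=2^{n/2}(-1)^{f_{\cc}^*(u)}$ reduces the question to showing that $\bigl|\sum_{\cc}B_{\cc}(-1)^{f_{\cc}^*(u)}\bigr|=1$ for every $u$.

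The main obstacle is precisely this last step: establishing the required converse of Proposition~\ref{gbebe}(i) amounts to a Galois-descent argument over $\QQ(\zeta_{2^k})$ that exploits the multiplicative structure of the coefficients $B_{\cc}$ together with the compatibility among the duals $f_{\cc}^*$ inherited from a common decomposition of $f$. Once this algebraic identity is in hand, varying $i$ over all integers coprime to $2^n-1$ completes the proof of g-hyperbentness of $f$.
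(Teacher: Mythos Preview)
Your forward implication is correct: pulling back along $x\mapsto x^j$ reduces g-hyperbentness of $f$ to gbentness of each $F_i$, and Proposition~\ref{gbebe}(i) then yields bentness of every $f_{\cc}(\cdot^{\,j})$, hence hyperbentness of $f_{\cc}$. The paper reaches the same conclusion by a direct computation through the identity~\eqref{eq:relationGWaGray}; the two routes are equivalent, as the remark following the theorem makes explicit.

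The reverse implication, however, has a genuine gap that cannot be filled. The ``compatibility among the duals $f_{\cc}^*$ inherited from a common decomposition of $f$'' that you hope to exploit does not exist: the only relations among the $f_{\cc}$ forced by their common origin are the $\F_2$-affine ones $f_{\cc}\+ f_{\cc'}\+ f_{\cc\+\cc'}\+ f_{\00}=0$, and these place no constraint whatsoever on the sign pattern $\cc\mapsto (-1)^{f_{\cc}^*(u)}$. In fact the ``if'' direction is false already for $k=3$, $n=2$. Take $a_1=x_1$, $a_2=x_2$, $a_3=x_1x_2$, so that $f=x_1+2x_2+4x_1x_2\in\cGB_2^{8}$. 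All four components $f_{\cc}=c_1x_1\+ c_2x_2\+ x_1x_2$ are bent (hence hyperbent, since for $n=2$ every bent function is hyperbent), yet
\[
\cH_f^{(8)}(\00)=\zeta_8^{0}+\zeta_8^{1}+\zeta_8^{2}+\zeta_8^{7}=1+\sqrt{2}+i,
\]
so $\big|\cH_f^{(8)}(\00)\big|^2=4+2\sqrt{2}\ne 4=2^n$ and $f$ is not even gbent. The paper's own converse argument does not escape this: it tacitly replaces the hypothesis ``each $f_{\cc}$ is hyperbent'' by the stronger assumption that $\cW_{f_{\cc},i}(a)=2^{n/2}(-1)^{\cc\cdot\iota^{-1}(g_i(a))+s_i(a)}$ for common functions $g_i,s_i$, i.e.\ that the sign pattern is \emph{affine} in $\cc$. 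What both calculations actually establish is the equivalence between g-hyperbentness of $f$ and hyperbentness of all $f_{\cc}$ \emph{together with} this affine dual structure; bare hyperbentness of the components is not sufficient.
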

\begin{proof}
  Let $i$ be coprime with $2^n-1$.  According
  to~(\ref{eq:decompositionzetak}),
  \begin{align*}
    \GWa[,i]{2^k}{f} (a) 
    &= \sum_{x\in\GF n}  \sum_{c\in\F_ 2^{k}} A_c (-1)^{\sum_{j=0}^{2^k-1}c_jf_j(x) + \Tr(ax^i)}\\
    &=  \sum_{c\in\F_ 2^{k}} A_c \Wa[,i]{f_\cc}(a).
  \end{align*}
  Now,
\begin{align*}
  2^k A_c 
  &= \prod_{j=0}^{k-1}\sum_{d\in\ZZ 2}\zetak^{d2^j+dc_j2^{k-1}}
  = \sum_{d\in\F_ 2^{k}}\zetak^{\sum_{j=0}^{k-1}d_j2^j+d_jc_j2^{k-1}}\\
  &= \sum_{d\in\F_ 2^{k}}\zetak^{\sum_{j=0}^{k-2}d_j2^j} (-1)^{\sum_{j=0}^{k-1}d_jc_j}.
\end{align*}
Then
\begin{align*}
  2^kA_c
  &=\sum_{d\in\F_ 2^{k}}\zetak^{\sum_{j=0}^{k-2}d_j2^j+2^{k-1}d_{k-1}} (-1)^{\sum_{j=0}^{k-1}d_jc_j}\\
  &=\left(\sum_{d_{{k-1}}\in\F_ 2} (-1)^{d_{k-1}+c_{k-1}d_{k-1}}\right)\sum_{(d_0,\dots,d_{{k}-2)})\in\F_{2}^{{k-1}}} \zetak^{\sum_{j=0}^{k-2}d_j2^j} (-1)^{\sum_{j=0}^{k-2}d_jc_j}\\
  &=\begin{cases}
    0 & \text{if $c_{k-1}=0$,}\\
    2\sum_{(d_0,\dots,d_{{k}-2)})\in\F_{2}^{k-1}} \zetak^{\sum_{j=0}^{k-2}d_j2^j} (-1)^{\sum_{j=0}^{k-2}d_jc_j} & \text{if $c_{k-1}=1$}.
  \end{cases}
\end{align*}
Define a ``dot product'' over $\F_{2}^{k-1}$ by setting
$c\cdot d=\sum_{j=0}^{l-2}c_jd_j$ for
$c=(c_0,c_1,\dots,c_{k-2})\in\F_2^{k-1}$ and
$d=(d_0,d_1,\dots,d_{k-2})\in\F_2^{k-1}$.  Define the ``canonical
injection'' $\fun{\iota} {\F_2^{k-1}}{\ZZ{2^{k-1}}}$  by
$\iota(c) = \sum_{j=0}^{k-2}c_k2^j$ where $c=(c_0,c_1,\dots,c_{k-2})$.
Then
\begin{align}\label{eq:relationGWaGray}
  \GWa[,i]{2^k}{f} (a) 
  &= \frac{1}{2^{k-1}} \sum_{(c,d)\in\F_ {2}^{k-1}\times\F_{2}^{k-1}} (-1)^{c\cdot d}\zetak^{\iota(d)}\,
    \Wa[,i]{f_\cc}(a).
\end{align}

Suppose now that $\fun f{\GF n}{\ZZ{2^k}}$ is g-hyperbent, so 
 for every
$i$ coprime with $2^n-1$, we have
\begin{align*}
  \GWa[,i]{2^k}{f}(a) = 2^{\frac n2}\zetak^{f^\star_i(a)}
\end{align*}
for some $\fun {f^\star_i}{\GF n}{\ZZ{2^k}}$.  Fix $i$ coprime with
$2^n-1$ and decompose $f^\star_i$ as $f^\star_i = g+2^{k-1}s$ with
$\fun {g}{\GF n}{\ZZ{2^{k-1}}}$ and $\fun {s}{\GF n}{\F_{2}}$ so that
\begin{align*}
  \GWa[,i]{2^k}{f}(a) = 2^{\frac n2}(-1)^{s(a)}\zetak^{g(a)}.
\end{align*}
Then, 
\begin{align}\label{eq:vanishing1}
  \sum_{d\in\F_{2}^{k-1}}\left( \frac{1}{2^{k-1}} \sum_{c\in\F_ {2}^{k-1}} (-1)^{c\cdot d}
  \Wa[,i]{f_c}(a)\right)\zetak^{\iota(d)} - 2^{\frac n2}(-1)^{s(a)}\zetak^{g(a)} = 0.
\end{align}
Now, $\{1,\zetak,\dots,\zetak^{2^{k-1}-1}\}$ being a basis of
$\QQ(\zetak)$,
\begin{align}\label{eq:valeur1}
  \frac{1}{2^{k-1}} \sum_{c\in\F_2^{k-1}} (-1)^{c\cdot d}
  \Wa[,i]{f_\cc}(a) &= \begin{cases}
    0 & \mbox{if $d\not =g(a)$}\\
    2^{\frac n2}(-1)^{s(a)} & \mbox{if $d=g(a)$}.
    \end{cases}
\end{align}
Now, let us invert (\ref{eq:valeur1}). We have for any $\cc\in\F_2^{k-1}$
\begin{align*}
  \Wa[,i]{f_\cc}(a) 
  &= \frac{1}{2^{k-1}}  \sum_{(c,d)\in\F_ 2^{k-1}} (-1)^{(c + \cc)\cdot d}
    \Wa[,i]{f_{c}}(a)\\
  &=\sum_{d\in\F_ 2^{k-1}}(-1)^{\cc\cdot d}\left(\frac{1}{2^{k-1}} 
    \sum_{\underbar c\in\F_ 2^{k-1}} (-1)^{c\cdot d}
    \Wa[,i]{f_c}(a)\right)\\
  &= (-1)^{\cc\cdot g(a)+s(a)}2^{\frac n2},
\end{align*}
for every $a\in\GF n$. Since $i$ is arbitrary in the preceding
calculation, that shows that $f_\cc$ is hyperbent.

Conversely, suppose that, for every $\gcd(i,2^n-1)=1$, there exists
$\fun {g_i}{\GF n}{\ZZ{2^{k-1}}}$ and $\fun{s_i}{\GF n}{\F_{2}}$
such that, for every $c\in\F_{2}^{k-1}$,
\begin{align*}
  \Wa[,i]{f_c}(a) &= 2^{\frac n2}(-1)^{c\cdot \iota^{-1}(g_i(a))+s_i(a)}.
\end{align*}
Thus, for every $\gcd(i,2^n-1)=1$, we have
\begin{align*}
  \GWa[,i]{2^k}{f} (a) 
  &= \frac{1}{2^{k-1}} \sum_{(c,d)\in\F_ 2^{k-1}\times\F_ 2^{k-1}} (-1)^{c\cdot d}\zetak^{\iota(d)}
    \Wa[,i]{f_c}(a)\\
  &= 2^{\frac n2}\cdot\frac{1}{2^{k-1}}\cdot
    \sum_{(c,d)\in\F_2^{k-1}\times\F_ 2^{k-1}} (-1)^{c\cdot d+c\cdot \iota^{-1}(g_i(a))+s_i(a)}\zetak^{\iota(d)}\\
  &= 2^{\frac n2} (-1)^{s_i(a)}\cdot
    \sum_{d\in\F_ 2^{k-1}}\left(\frac{1}{2^{k-1}}\sum_{c\in\F_{2}^{k-1}}(-1)^{c\cdot (d+ \iota^{-1}(g_i(a)))}\right)\zetak^{\iota(d)}\\
  &=  2^{\frac n2} (-1)^{s_i(a)} \zetak^{g_i(a))}
\end{align*}
proving that $f$ is g-hyperbent.
\end{proof}

\begin{remark}
  In the proof of Theorem~\textup{\ref{ghyperbent}}, we have only used the fact
  that the Walsh transform of $f(\xx^i)$ divided by its magnitude is a
  root of unity. The proof of Theorem~\textup{\ref{ghyperbent}} proposes
  therefore an alternate proof of (i) of Proposition~\textup{\ref{gbebe}}. It
  also shows that the g-bentness of $f$ is equivalent to the bentness
  of all the ``component functions'' $f_\cc$.
\end{remark}

We now turn our attention to the case where $n$ is odd and prove the following.

\begin{theorem}
\label{sbent}
Let $f:\V_n\rightarrow\Z_{2^k}$, $n$ odd, be a gbent function given as
$f(\xx) = a_1(\xx)+2a_2(\xx)+\cdots+ 2^{k-1}a_{k}(\xx)$.  If $f$ is
gbent then, for each $\cc\in\F_2^{k-1}$, the Boolean function $g_\cc$
defined as
\[ g_\cc(\xx) = c_1a_1(\xx)\+c_2a_2(\xx)\+\cdots\+c_{k-1}a_{k-1}(\xx)\+a_{k}(\xx) \]
is a semibent function.
\end{theorem}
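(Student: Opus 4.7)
The plan is to adapt the proof of Theorem~\ref{ghyperbent} to odd $n$, now working with the standard (non-extended) Walsh--Hadamard transform, i.e.\ specializing the variable $i=1$ throughout. Setting $T_d(a):=2^{1-k}\sum_{\cc\in\F_2^{k-1}}(-1)^{\cc\cdot d}\,\mathcal{W}_{g_\cc}(a)\in\Q$, equation~(\ref{eq:relationGWaGray}) reduces to
\begin{equation*}
\mathcal{H}_f(a)\,=\,\sum_{d\in\F_2^{k-1}} T_d(a)\,\zeta_{2^k}^{\iota(d)}.
\end{equation*}
Because $\{\zeta_{2^k}^{\iota(d)}\,:\,d\in\F_2^{k-1}\}=\{1,\zeta_{2^k},\dots,\zeta_{2^k}^{2^{k-1}-1}\}$ is a $\Q$-basis of $\Q(\zeta_{2^k})$, the coefficients $T_d(a)$ are uniquely determined by \emph{any} expression of $\mathcal{H}_f(a)$ in that basis. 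The strategy is then to make that expansion explicit from the gbent hypothesis, observe that only two of the $T_d(a)$'s are nonzero, and invert.

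For $k\geq 3$, gbentness gives $\mathcal{H}_f(a)=2^{n/2}\zeta_{2^k}^{f^\ast(a)}$. Writing $f^\ast(a)=g(a)+2^{k-1}s(a)$ with $g:\V_n\to\Z_{2^{k-1}}$ and $s:\V_n\to\F_2$, and using the key identity
\begin{equation*}
\sqrt{2}\,=\,\zeta_8+\zeta_8^{-1}\,=\,\zeta_{2^k}^{2^{k-3}}-\zeta_{2^k}^{3\cdot 2^{k-3}}\qquad(k\geq 3)
\end{equation*}
(the second equality uses $\zeta_{2^k}^{2^{k-1}}=-1$), I would obtain
\begin{equation*}
\mathcal{H}_f(a)\,=\,2^{(n-1)/2}(-1)^{s(a)}\bigl(\zeta_{2^k}^{g(a)+2^{k-3}}\,-\,\zeta_{2^k}^{g(a)+3\cdot 2^{k-3}}\bigr).
\end{equation*}
A short case analysis reducing the exponents modulo $2^{k-1}$ (any wrap-around flips a sign via $\zeta_{2^k}^{2^{k-1}}=-1$) exhibits $\mathcal{H}_f(a)$ as $\epsilon_1 2^{(n-1)/2}\zeta_{2^k}^{r_1(a)}+\epsilon_2 2^{(n-1)/2}\zeta_{2^k}^{r_2(a)}$ with $\epsilon_j\in\{\pm 1\}$ and two \emph{distinct} exponents $r_1(a),r_2(a)\in\{0,\dots,2^{k-1}-1\}$ (distinctness follows because $2^{k-3}$ and $3\cdot 2^{k-3}$ differ by $2^{k-2}\not\equiv 0\pmod{2^{k-1}}$).

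Uniqueness of the basis expansion now forces exactly two of the $T_d(a)$'s to be nonzero, each equal to $\pm 2^{(n-1)/2}$. Applying the inverse transform on $\F_2^{k-1}$ (by orthogonality of the characters $\cc\mapsto(-1)^{\cc\cdot d}$) yields
\begin{equation*}
\mathcal{W}_{g_\cc}(a)\,=\,\sum_{d\in\F_2^{k-1}}(-1)^{\cc\cdot d}T_d(a)\,=\,2^{(n-1)/2}\bigl(\epsilon_1(-1)^{\cc\cdot d_1}+\epsilon_2(-1)^{\cc\cdot d_2}\bigr)\,\in\,\{0,\pm 2^{(n+1)/2}\},
\end{equation*}
which is precisely the semibent condition. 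The residual case $k=2$ follows directly from $\mathcal{H}_f^{4}(a)=2^{(n-1)/2}(\pm 1\pm i)$, which is already expressed in the $\Q$-basis $\{1,i\}$ with two nonzero coefficients of magnitude $2^{(n-1)/2}$, and the same inversion argument applies verbatim.

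The principal technical obstacle I anticipate is the bookkeeping of the exponent reductions modulo $2^{k-1}$, together with verifying in each subcase that $r_1(a)\neq r_2(a)$ (so the ``exactly two nonzero coefficients'' count is genuine and the inversion produces a ternary value set); once this is settled, the rest is linear algebra in the cyclotomic basis, mirroring the even-$n$ proof of Theorem~\ref{ghyperbent}.
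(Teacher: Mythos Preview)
Your proposal is correct and follows essentially the same approach as the paper's proof: both use the relation~(\ref{eq:relationGWaGray}) with $i=1$, the identity $\sqrt{2}=\zeta_{2^k}^{2^{k-3}}-\zeta_{2^k}^{3\cdot 2^{k-3}}$ to express $\mathcal{H}_f(a)$ as a two-term combination in the cyclotomic basis, uniqueness of coefficients in that basis, and then Fourier inversion on $\F_2^{k-1}$ to recover $\mathcal{W}_{g_\cc}(a)$. Your explicit verification that the two exponents $r_1(a),r_2(a)$ are distinct and your separate treatment of $k=2$ are details the paper leaves implicit, but otherwise the arguments coincide.
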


\begin{proof}
  We know that $2^{-\frac n2}\GWa{2^k}{f}(a)$ is a root of
  unity. Therefore, for every $a\in\GF n$,
  \begin{align*}
    \GWa{2^k}{f}(a) 
     &= 2^{\frac n2}\zetak^{f^\star(a)}
       = 2^{\frac{n-1}2}\, \sqrt{2} \,\zetak^{f^\star(a)},
  \end{align*}
  for some map $\fun {f^\star}{\GF n}{\ZZ{2^k}}$.  Recall now that
  $\mathbb Q(\sqrt 2)\subset\mathbb Q (\zetak)$. Indeed,
  $\sqrt 2=\zetak[8] +\bar{\zetak[8]}=\zetak[8]
  +\zetak[8]^{-1}=\zetak[8]
  +\zetak[8]^{7}=\zetak[8]-\zetak[8]^3=\zetak^{2^{k-3}}-\zetak^{3\cdot
    2^{k-3}}$. Thus
\begin{align*}
  \GWa{2^k}{f}(a) 
  &=  2^{\frac{n-1}2}\left( \zetak^{f^\star(a)+2^{k-3}} - \zetak^{f^\star(a)+3\cdot 2^{k-3}}\right).
\end{align*}
Write $f^\star(a)+2^{k-3}=g_1(a)+2^{k-1}s_2(a) + 2^kt_1(a)$ and
$f^\star(a)+3\cdot 2^{k-3}=g_2(a)+2^{k-1}s_2(a)+2^k t_2(a)$ so that
\begin{align*}
  \GWa{2^k}{f}(a) = 2^{\frac{n-1}2}(-1)^{s_1(a)}\zetak^{g_1(a)} -2^{\frac{n-1}2}(-1)^{s_2(a)} \zetak^{g_2(a)}.
\end{align*}
In the proof of Theorem \ref{ghyperbent}, we have established the
following relation between the Walsh-Hadamard transform of $f$ and the
Walsh transform of its ``component'' $f_\cc$ (take $i=1$ in
(\ref{eq:relationGWaGray}) and recall that $\iota$ is the
``canonical'' injection from $\F_2^{k-1}$ to $\ZZ{2^{k-1}}$ which
sends $(c_0,\dots,c_{k-2})$ to $\sum_{j=0}^{k-2}c_j2^j$), namely,
\begin{align}\label{eq:relationGWaGray}
  \GWa{2^k}{f} (a) 
  &= \frac{1}{2^{k-1}} \sum_{(c,d)\in\F_ {2}^{k-1}\times\F_{2}^{k-1}} (-1)^{c\cdot d}\zetak^{\iota(d)}.
    \Wa{f_\cc}(a)\\
  &= \sum_{d\in\F_2^{k-1}} \left( \frac{1}{2^{k-1}} \sum_{c\in\F_ 2^{k-1}} (-1)^{c\cdot d}
    \Wa{f_c}(a) \right)\zetak^{\iota(d)}.
\end{align}
Then, one has
\begin{align}
  \label{eq:valeur2}
  \frac{1}{2^{k-1}} \sum_{c\in\F_ 2^{k-1}} (-1)^{c\cdot d}
  \Wa{f_c}(a) 
  &= \begin{cases}
    0 & \mbox{if $d\not\in\{g_1(a),g_2(a)\}$}\\
    2^{\frac {n-1}2}(-1)^{s_1(a)} & \mbox{if $d=g_1(a)$}\\
    -2^{\frac {n-1}2}(-1)^{s_2(a)} & \mbox{if $d=g_2(a)$}.
    \end{cases}
\end{align}
Thus
\begin{align*}
  \Wa{f_\cc}(a) 
  &= \frac{1}{2^{k-1}}  \sum_{(c,d)\in\F_ 2^{k-1}\times\F_ 2^{k-1}} (-1)^{(c + \cc)\cdot d}
    \Wa{f_c}(a)\\
  &=\sum_{d\in\F_ 2^{k-1}}(-1)^{\cc\cdot d}\frac{1}{2^{k-1}} 
    \sum_{c\in\F_ 2^{k-1}} (-1)^{c\cdot d}
    \Wa {f_c}(a)\\
  &= \frac{(-1)^{\cc\cdot g_1(a)+s_1(a)}-(-1)^{\cc\cdot g_2(a)+s_2(a)}}{2}\,2^{\frac {n+1}2}
\end{align*}
proving that $f_\cc$ is semibent since
\begin{align*}
  \frac{(-1)^{\cc\cdot g_1(a)+s_1(a)}-(-1)^{\cc\cdot g_2(a)+s_2(a)}}{2}\in\{-1,0,1\}
\end{align*}
for every $a\in\GF n$.
\end{proof}

In the following proposition we decompose a gbent function in
$\mathcal{GB}_n^{2^k}$ into two gbent functions in
$\mathcal{GB}_n^{2^{k^\prime}}$ for some $k^\prime$ smaller than
$k$. We will show the decomposition more general for g-hyperbent
functions, where we consider functions from $\F_{2^n}$ to $\Z_{2^k}$.
The crucial lemma for analyzing the decomposition of $f$ when $n$ is
even, is Lemma~\ref{valdis}.  For instance the proof of
Proposition~\ref{gbebe} $(i)$ is based on this lemma.

We intend to show our results on decompositions of gbent functions for $n$ even and for $n$ odd simultaneously.
Therefore we first deduce a more complex analog of Lemma~\ref{valdis} which is applicable to gbent functions in an odd number of variables.



For $k \ge 3$, let again $\zeta_{2^k}$ be a primitive $2^k$-root of unity. Then $\zeta_{2^k}^{2^{k-3}}$ is a primitive $2^3$-root of unity,
and without loss of generality, we assume that $\zeta_{2^k}^{2^{k-3}} = \zeta_{2^3} = (1+i)/\sqrt{2}$. Recall that for $k\ge 3$ every gbent function is regular, i.e.
for an integer $0\le \rho_\uu\le 2^k-1$ (depending on $\uu$) we have
\begin{align*}
 \mathcal{H}_f^{(2^k)}(\uu) &= 2^{n/2}\zeta_{2^k}^{\rho_\uu} = 2^{n/2} \zeta_{2^k}^{2^{k-3}} \zeta_{2^k}^{\rho_\uu-2^{k-3}}\\
  &= 2^\frac{n-1}{2}(1+i)\zeta_{2^k}^{\rho_\uu-2^{k-3}} \\
 &= 2^\frac{n-1}{2}\zeta_{2^k}^{\rho_\uu-2^{k-3}} + 2^\frac{n-1}{2}\zeta_{2^k}^{2^{k-2}}\zeta_{2^k}^{\rho_\uu-2^{k-3}} \\
 &= 2^\frac{n-1}{2}\zeta_{2^k}^{\rho_\uu-2^{k-3}} + 2^\frac{n-1}{2}\zeta_{2^k}^{\rho_\uu+2^{k-3}}.
\end{align*}
\begin{proposition}
\label{GB-con}
For an odd integer $n$ and $k\ge 3$, let $f$ be a function from $\V_n$ to $\Z_{2^k}$, for $\uu\in\V_n$ let $f_\uu(\xx) = f(\xx) + 2^{k-1}(\uu\cdot\xx)$,
and let $B_\uu(\rho) = \{\xx\in\V_n\,:\,f_\uu(\xx) = \rho\}$. Then $f$ is gbent if and only if for all $\uu\in\V_n$ there exists an integer $\rho_\uu$, $0\le \rho_\uu\le 2^k-1$,
such that
\[
|B_{\uu}(\rho_\uu-2^{k-3}+2^{k-1})| = |B_\uu(\rho_\uu-2^{k-3})|\pm 2^{\frac{n-1}{2}}
\]
and
\[
|B_\uu(\rho_\uu+2^{k-3}+2^{k-1})| = |B_\uu(\rho_\uu+2^{k-3})|\pm 2^{\frac{n-1}{2}}
\]
where in both equations we have the same sign (and the argument of $B_\uu$ is reduced modulo $2^k$), and
\[
|B_\uu(\rho+2^{k-1})| = |B_\uu(\rho)|,
 \]
if $\rho \ne \rho_\uu\pm 2^{k-3}, \rho_\uu\pm 2^{k-3}+2^{k-1}$.
\end{proposition}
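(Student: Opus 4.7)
The plan is the natural analogue of Lemma~\ref{valdis}, but worked over the full cyclotomic field $\QQ(\zeta_{2^k})$ rather than $\QQ(i)$. I would produce two expansions of $\cH_f(\uu)$ and compare coefficients in the standard $\QQ$-basis $\{1,\zeta_{2^k},\ldots,\zeta_{2^k}^{2^{k-1}-1}\}$. Since $k\ge 3$ and $n$ is odd, regularity (recalled just before the proposition) gives $\cH_f(\uu)=2^{n/2}\zeta_{2^k}^{\rho_\uu}$ for a unique $\rho_\uu\in\Z_{2^k}$ whenever $f$ is gbent, and the factorisation $\zeta_{2^k}^{2^{k-3}}=(1+i)/\sqrt{2}$ carried out in the paragraph preceding the proposition rewrites this as
\[
\cH_f(\uu)\,=\,2^{(n-1)/2}\bigl(\zeta_{2^k}^{\rho_\uu-2^{k-3}}+\zeta_{2^k}^{\rho_\uu+2^{k-3}}\bigr).
\]
On the other hand, using $\zeta_{2^k}^{2^{k-1}}=-1$ to fold upper-half exponents into the basis, we obtain the direct expansion
\[
\cH_f(\uu)\,=\,\sum_{\rho=0}^{2^{k-1}-1}\bigl(|B_\uu(\rho)|-|B_\uu(\rho+2^{k-1})|\bigr)\,\zeta_{2^k}^{\rho}.
\]

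For the direct implication, set $\alpha=(\rho_\uu-2^{k-3})\bmod 2^k$ and $\beta=(\rho_\uu+2^{k-3})\bmod 2^k$, and reduce these to the basis by writing $\zeta_{2^k}^{\alpha}=\varepsilon_\alpha\zeta_{2^k}^{\bar\alpha}$ with $\bar\alpha=\alpha\bmod 2^{k-1}$ and $\varepsilon_\alpha=(-1)^{\lfloor\alpha/2^{k-1}\rfloor}$ (and analogously for $\beta$). Because $\beta-\alpha\equiv 2^{k-2}\pmod{2^k}$ and $k\ge 3$, the basis indices $\bar\alpha$ and $\bar\beta$ are distinct, so matching the two expansions coefficient-by-coefficient produces
\[
|B_\uu(\bar\alpha)|-|B_\uu(\bar\alpha+2^{k-1})|=\varepsilon_\alpha 2^{(n-1)/2},\qquad |B_\uu(\bar\beta)|-|B_\uu(\bar\beta+2^{k-1})|=\varepsilon_\beta 2^{(n-1)/2},
\]
with $|B_\uu(\rho)|=|B_\uu(\rho+2^{k-1})|$ for all other $\rho\in\{0,\ldots,2^{k-1}-1\}$. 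Reindexing via the identity $\alpha+2^{k-1}\equiv\rho_\uu-2^{k-3}+2^{k-1}\pmod{2^k}$ (and its $\beta$-analogue) turns these into the three families of equations claimed in the proposition. The converse runs the same algebraic chain in reverse: starting from the prescribed conditions, the telescoped sum collapses to exactly two nonzero terms whose reassembly yields $\pm 2^{n/2}\zeta_{2^k}^{\rho_\uu}$, of modulus $2^{n/2}$, so $f$ is gbent.

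The main technical obstacle is the ``same sign'' assertion, which is not obvious a priori. When $\varepsilon_\alpha=\varepsilon_\beta$ the common sign in both displayed equations is immediately $-\varepsilon_\alpha$. The subtler case $\varepsilon_\alpha\ne\varepsilon_\beta$, which occurs exactly when $\rho_\uu$ lies within $2^{k-3}$ of $0$ or of $2^{k-1}$, requires the observation that whenever $\alpha\ge 2^{k-1}$ one has $\alpha+2^{k-1}\equiv\alpha-2^{k-1}\pmod{2^k}$; this extra swap of arguments exactly cancels the $\varepsilon_\alpha=-1$ factor when the equation is reexpressed in terms of $\rho_\uu-2^{k-3}\bmod 2^k$ rather than $\bar\alpha$. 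A short case check on the four pairs $(\varepsilon_\alpha,\varepsilon_\beta)\in\{\pm 1\}^2$ then confirms that the two resulting signs in the statement always coincide, completing the proof.
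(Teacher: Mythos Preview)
Your proposal is correct and follows essentially the same route as the paper: both obtain the two expansions $\cH_f(\uu)=2^{(n-1)/2}\bigl(\zeta_{2^k}^{\rho_\uu-2^{k-3}}+\zeta_{2^k}^{\rho_\uu+2^{k-3}}\bigr)$ and $\cH_f(\uu)=\sum_{\rho=0}^{2^{k-1}-1}(|B_\uu(\rho)|-|B_\uu(\rho+2^{k-1})|)\zeta_{2^k}^{\rho}$, and compare coefficients in the basis $\{1,\zeta_{2^k},\ldots,\zeta_{2^k}^{2^{k-1}-1}\}$ for the forward direction, while for the converse both plug the conditions into the sum and collapse it to a root of unity times $2^{n/2}$. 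The paper's write-up is quite terse on the forward direction (it simply asserts that the conditions follow from the two displayed equations), whereas you spell out the reduction of $\rho_\uu\pm 2^{k-3}$ into the basis range and the resulting sign bookkeeping; your observation that the swap $\alpha\mapsto\alpha-2^{k-1}$ cancels the $\varepsilon_\alpha=-1$ factor is exactly the missing detail that makes the ``same sign'' clause come out correctly.
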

\begin{proof}
Let $f$ be a function from $\V_n$ to $\Z_{2^k}$ for which the conditions in the proposition hold. For $\uu\in\V_n$, the generalized Walsh-Hadamard transform at $\uu$ is then
\begin{align*}
\mathcal{H}_f^{(2^k)}(\uu) &= \sum_{\xx\in\V_n}\zeta_{2^k}^{f_\uu(\xx)} = \sum_{\rho=0}^{2^k-1}|B_\uu(\rho)|\zeta_{2^k}^\rho \\
& = (|B_\uu(\rho_\uu-2^{k-3})| - (|B_\uu(\rho_\uu-2^{k-3})|\pm 2^{\frac{n-1}{2}}))\zeta_{2^k}^{\rho_\uu-2^{k-3}} \\ & +
(|B_\uu(\rho_\uu+2^{k-3})| - (|B_\uu(\rho_\uu+2^{k-3})|\pm 2^{\frac{n-1}{2}}))\zeta_{2^k}^{\rho_\uu+2^{k-3}} \\
& = \pm 2^{\frac{n-1}{2}}\zeta_{2^k}^{\rho_\uu-2^{k-3}} \pm 2^{\frac{n-1}{2}}\zeta_{2^k}^{\rho_\uu+2^{k-3}} = 2^{\frac{n-1}{2}}\zeta_{2^k}^{\rho_\uu}\zeta_{2^k}^{2^{k-3}}(\pm i\pm 1) \\
& = 2^{\frac{n-1}{2}}\zeta_{2^k}^{\rho_\uu}\frac{1+i}{\sqrt{2}}(\pm i\pm 1) = 2^{\frac{n-1}{2}}\zeta_{2^k}^{\rho_\uu}\frac{1+i}{\sqrt{2}}\alpha.
\end{align*}
(Here the arguments of $B_\uu$ are reduced modulo $2^k$.)
With $\frac{1+i}{\sqrt{2}}(1+i) = \sqrt{2}i = \sqrt{2}\zeta_{2^k}^{2^{k-2}}$, we get $\mathcal{H}_f^{(2^k)}(\uu) = 2^{n/2}\zeta_{2^k}^{\rho_\uu+2^{k-2}}$ when
$\alpha = 1+i$. Similarly, when $\alpha = -1-i$, $\alpha = 1-i$, respectively $\alpha = -1+i$, for $\mathcal{H}_f^{(2^k)}(\uu)$ we obtain
$2^{n/2}\zeta_{2^k}^{\rho_\uu+2^{k-2}+2^{k-1}}$, $2^{n/2}\zeta_{2^k}^{\rho_\uu}$, respectively $2^{n/2}\zeta_{2^k}^{\rho_\uu+2^{k-1}}$. Therefore $f$ is gbent.

Conversely suppose that $f$ is gbent. As observed above, for $\uu\in\V_n$ we then have
\begin{equation}
\label{Hf1}
\mathcal{H}_f^{(2^k)}(\uu) = 2^\frac{n-1}{2}\zeta_{2^k}^{\rho_\uu-2^{k-3}} + 2^\frac{n-1}{2}\zeta_{2^k}^{\rho_\uu+2^{k-3}},
\end{equation}
for some $0\le \rho_\uu \le 2^k-1$ depending on $\uu$. By the definition of $B_\uu(\rho)$ we moreover have
\begin{align}
\label{Hf2}
\nonumber
\mathcal{H}_f^{(2^k)}(\uu) & = |B_\uu(0)| + |B_\uu(1)|\zeta_{2^k} + \cdots + |B_\uu(2^{k-1}-1)|\zeta_{2^k}^{2^{k-1}-1} \\ \nonumber
& + |B_\uu(2^{k-1})|(-1) + |B_\uu(2^{k-1}+1)|\zeta_{2^k}^{2^{k-1}+1} + \cdots + |B_\uu(2^k-1)|\zeta_{2^k}^{2^k-1} \\ \nonumber
& = (|B_\uu(0)|-|B_\uu(2^{k-1})|) + (|B_\uu(1)|-|B_\uu(2^{k-1}+1)|)\zeta_{2^k} + \cdots \\ & + (|B_\uu(2^{k-1}-1)|-|B_\uu(2^k-1)|)\zeta_{2^k}^{2^{k-1}-1}.
\end{align}
Since $\{1,\zeta_{2^k},\ldots,\zeta_{2^k}^{2^{k-1}-1}\}$ is a basis of $\Q(\zeta_{2^k})$, the conditions in the proposition follow from equations~\eqref{Hf1} and~\eqref{Hf2}.
\end{proof}

%
Let us now explain how to deduce from Proposition \ref{GB-con} a first
result. We include the hyperbent condition only in the first part of the proof of the following proposition. 
As we will see, including this condition does not change the arguments, hence we will omit it in the further,
although the decomposition results also hold for g-hyperbent
functions.
\begin{proposition}
\label{thm-tgen}
Let $k\ge 2t$, and let $f:\F_{2^n}\rightarrow\Z_{2^k}$
be a g-hyperbent function given as
\[ f(x) = a_1(x)+2a_2(x) + \cdots + 2^{k-1}a_k(x) = g(x) + 2^t h(x) \]
for some Boolean functions $a_i:\F_{2^n}\rightarrow\F_2$, $1\le i \le k$, and
\begin{align*}
g(x) &= a_1(x)+2a_2(x) + \cdots + 2^{t-1}a_t(x) \in \mathcal{GB}_n^{2^t}, \\
h(x) &= a_{t+1}(x) + 2a_{t+2}(x) + \cdots + 2^{k-t-1}a_k(x) \in \mathcal{GB}_n^{2^{k-t}}.
\end{align*}
If $n$ is even or $k\ge 3$, then the functions $h(x)$ and $h(x) + 2^{k-2t}g(x)$ are g-hyperbent functions in $\mathcal{GB}_n^{2^{k-t}}$.
\end{proposition}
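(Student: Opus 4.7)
The strategy is to argue at the level of value distributions, exploiting the identity $f_\uu(x) := f(x)+2^{k-1}(\uu\cdot x) = g(x)+2^t h_\uu(x)$ with $h_\uu(x) := h(x)+2^{k-t-1}(\uu\cdot x)$. Because $g(x) \in \{0,\ldots,2^t-1\}$ and $h_\uu(x) \in \{0,\ldots,2^{k-t}-1\}$ jointly determine $f_\uu(x) \in \{0,\ldots,2^k-1\}$ uniquely, setting $N(r,s) := |\{x : g(x)=r,\,h_\uu(x)=s\}|$ yields $|B_\uu(r+2^t s)| = N(r,s)$. Feeding the gbentness of $f$ into Lemma~\ref{valdis} (for $n$ even) or Proposition~\ref{GB-con} (for $n$ odd and $k \ge 3$) then pins down the $s \leftrightarrow s+2^{k-t-1}$ imbalance pattern of $N$: either a single pair $(r_\uu,s_\uu)$ contributes $N(r_\uu,s_\uu+2^{k-t-1}) - N(r_\uu,s_\uu) = \pm 2^{n/2}$ with everything else balanced, or (in the odd case) two pairs $(r_\uu^\ast,s_\uu^\pm)$ contribute $\pm 2^{(n-1)/2}$ with coordinated signs.

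Summing $N$ over $r$ gives the value distribution $M(s) := \sum_r N(r,s)$ of $h_\uu$: the collapse kills all balanced cross-differences and transfers the surviving imbalance(s) directly to the corresponding $s$-coordinates, so $M$ fits the appropriate characterization at level $2^{k-t}$ (or, when Proposition~\ref{GB-con} is unavailable there, $\cH_h^{(2^{k-t})}(\uu)$ is computed directly from $M$), proving $h$ gbent. For $H := h + 2^{k-2t}g$, the analogous twist is $H_\uu(x) = h_\uu(x) + 2^{k-2t}g(x) \bmod 2^{k-t}$, with value distribution $\tilde M(\tau) = \sum_r N(r,\,(\tau - 2^{k-2t}r) \bmod 2^{k-t})$. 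For each fixed $r$, the translation $s \mapsto s + 2^{k-2t}r \bmod 2^{k-t}$ commutes with the half-swap $s \mapsto s + 2^{k-t-1} \bmod 2^{k-t}$, so the row $r = r_\uu$ (respectively $r = r_\uu^\ast$) transports its imbalance intact to $\tau_\uu := s_\uu + 2^{k-2t}r_\uu \bmod 2^{k-t}$ versus $\tau_\uu + 2^{k-t-1}$, and all other rows continue to cancel. Thus $\tilde M$ again fits the characterization at level $2^{k-t}$, giving $H$ gbent. The g-hyperbent conclusion is immediate: substituting $x \mapsto x^i$ for each $i$ coprime to $2^n-1$ leaves the decomposition $f(x^i) = g(x^i) + 2^t h(x^i)$ of the same form, so the argument transfers to $h \circ (\cdot)^i$ and $H \circ (\cdot)^i$ unchanged.

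The principal obstacle is the odd case, where the two coordinated imbalances of Proposition~\ref{GB-con} must be tracked coherently through both the sum over $r$ and the shift by $2^{k-2t}r$. The hypotheses $k \ge 2t$ and $k \ge 3$ combine to force $k \ge t + 2$; this is precisely what is needed so that the two imbalance locations $\rho_\uu \pm 2^{k-3}$ share a common $r$-component $r_\uu^\ast$ and have $s$-components differing by exactly $2^{k-t-2}$, so they relabel as $\sigma_\uu \pm 2^{k-t-3}$ at level $2^{k-t}$ whenever $k - t \ge 3$. The residual case $k-t = 2$ with $n$ odd falls outside Proposition~\ref{GB-con} at level $2^{k-t}$, but then the two surviving imbalances populate the real and imaginary parts of $\cH_h^{(4)}(\uu) = (M(0)-M(2)) + i(M(1)-M(3))$, yielding $|\cH_h^{(4)}(\uu)| = 2^{n/2}$ and matching the $2^{(n-1)/2}(\pm 1 \pm i)$ form from the earlier regularity proposition; the same direct computation handles $\tilde M$ and $H$. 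Consistent sign tracking under wraparound modulo $2^{k-t}$ is the remaining bookkeeping, but it poses no structural difficulty.
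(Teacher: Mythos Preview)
Your proposal is correct and follows essentially the same route as the paper: define the joint value counts $N(r,s)=|\{x:g(x)=r,\,h_\uu(x)=s\}|$ (the paper's $S^{(u)}(e,r)$), read off the imbalance pattern from Lemma~\ref{valdis} or Proposition~\ref{GB-con}, observe in the odd case that $k\ge t+2$ forces the two special positions to share their $r$-coordinate with $s$-coordinates differing by $2^{k-t-2}$, and then compute $\cH_h^{(2^{k-t})}$ and $\cH_{h+2^{k-2t}g}^{(2^{k-t})}$ directly from these counts. The only cosmetic difference is that the paper does the direct transform computation uniformly for odd $n$ rather than splitting off the $k-t=2$ subcase, and carries the exponent $i$ through the even-$n$ calculation rather than appealing to $x\mapsto x^i$ at the end; neither changes the substance.
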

\begin{proof}
For an integer $i$, $\gcd(i,2^n-1)=1$, and an element $u\in\V_n=\F_{2^n}$, let $f_{u,i}(x) = f(x) + 2^{k-1}\Tr(ux^i)$, $h_{u,i}(x) = h(x) + 2^{k-t-1}\Tr(ux^i)$,
and for $0\le e\le 2^t-1$, $0\le r\le 2^{k-t}-1$, denote by $S^{(u,i)}(e,r)$ the set
\[ S^{(u,i)}(e,r) = \{x\,:\,f_{u,i}(x) = e+2^tr\} = \{x\,:\,g(x) = e,h_{u,i}(x) = r\}. \]
First we suppose that $n$ is even. Then, since $f$ is g-hyperbent, by an obvious version of Lemma~\ref{valdis} for g-hyperbent functions,
for $0\le e\le 2^t-1$ and $0\le \tilde{r}\le 2^{k-t-1}-1$ we have
\[ |S^{(u,i)}(e,\tilde{r})| = |S^{(u,i)}(e,\tilde{r}+2^{k-t-1})| \]
for all but one pair, say the pair $(e,\tilde{r}) = (\epsilon_{u,i},\rho_{u,i})$, for which we have
\[ |S^{(u,i)}(\epsilon_{u,i},\rho_{u,i}+2^{k-t-1})| = |S^{(u,i)}(\epsilon_{u,i},\rho_{u,i})| \pm 2^{n/2}. \]
Consequently,
\begin{align*}
\mathcal{H}_{h,i}^{(2^{k-t})}(u) & = \sum_{x\in\V_n}\zeta_{2^{k-t}}^{h(x)+2^{k-t-1}\Tr(ux^i)} = \sum_{x\in\V_n}\zeta_{2^{k-t}}^{h_{u,i}(x)} \\
& =
\sum_{0\le e\le 2^t-1\atop 0\le r\le 2^{k-t}-1}|S^{(u,i)}(e,r)|\zeta_{2^{k-t}}^r \\
& = \sum_{0\le e\le 2^t-1\atop 0\le \tilde{r}\le 2^{k-t-1}-1}\left[|S^{(u,i)}(e,\tilde{r})|-|S^{(u,i)}(e,\tilde{r}+2^{k-t-1})|\right]\zeta_{2^{k-t}}^{\tilde{r}}\\
&= \pm 2^{n/2}\zeta_{2^{k-t}}^{\rho_{u,i}},
\end{align*}
hence $h$ is g-hyperbent. For $h+2^{k-2t}g$ we have
\begin{align*}
\mathcal{H}_{h+2^{k-2t}g,i}^{(2^{k-t})}(u) & = \sum_{x\in\V_n}\zeta_{2^{k-t}}^{h_{u,i}(x)+2^{k-2t}g(x)} \\
& = \sum_{0\le e\le 2^t-1\atop 0\le \tilde{r}\le 2^{k-t-1}-1}\left[|S^{(u,i)}(e,\tilde{r})|-|S^{(u,i)}(e,\tilde{r}+2^{k-t-1})|\right]\zeta_{2^{k-t}}^{\tilde{r}+2^{k-2t}e} \\
& = \pm 2^{n/2}\zeta_{2^{k-t}}^{\rho_{u,i}+2^{k-2t}\epsilon_{u}},
\end{align*}
and hence $h+2^{k-2t}g$ is g-hyperbent. \\
Now suppose that $n$ is odd and $k\ge 3$. Let $f_{u}(x) = f(x) + 2^{k-1}\Tr(ux)$, $h_{u}(x) = h(x) + 2^{k-t-1}\Tr(ux)$,
$S^{(u)}(e,r) = \{x\,:\,f_{u}(x) = e+2^tr\} = \{x\,:\,g(x) = e,h_{u}(x) = r\}$. If $f$ is gbent, by Proposition \ref{GB-con} there exist two integers
\begin{align*}
\rho_u^{(1)} & = \epsilon_{u,1}+2^t\rho_{u,1} = \rho_u-2^{k-3}, \\
\rho_u^{(2)} & = \epsilon_{u,2}+2^t\rho_{u,2} = \rho_u+2^{k-3},
\end{align*}
where $0\le \epsilon_{u,j}\le 2^t-1$, $0\le \rho_{u,j}\le 2^{k-t-1}-1$, $j=1,2$, such that
\[ |S^{u}(\epsilon_{u,j},\rho_{u,j}+2^{k-t-1})| = |S^{u}(\epsilon_{u,j},\rho_{u,j})|\pm 2^{\frac{n-1}{2}},\, j = 1,2. \]
For $(e,r) \ne (\epsilon_{u,j},\rho_{u,j})$ we have
\[ |S^{u}(e,r+2^{k-t-1})| = |S^{u}(e,r)|. \]
Observe that $\rho_u^{(2)} - \rho_u^{(1)} = \epsilon_{u,2} - \epsilon_{u,1} + 2^t(\rho_{u,2} - \rho_{u,1}) = 2^{k-2}$,
therefore $2^t|(\epsilon_{u,2}-\epsilon_{u,1})$, and consequently $\epsilon_{u,2}=\epsilon_{u,1}$ and $\rho_{u,2}-\rho_{u,1} = 2^{k-t-2}$.
For the generalized Walsh-Hadamard transform of $h$ we then get
\begin{align*}
\mathcal{H}_h^{(2^{k-t})}(u) & = \sum_{x\in\V_n}\zeta_{2^{k-t}}^{h_{u}(x)} =
\sum_{0\le e\le 2^t-1\atop 0\le r\le 2^{k-t}-1}|S^{(u)}(e,r)|\zeta_{2^{k-t}}^r = 2^{\frac{n-1}{2}}\left(\pm\zeta_{2^{k-t}}^{\rho_{u,1}}\pm \zeta_{2^{k-t}}^{\rho_{u,2}}\right) \\
& = 2^{\frac{n-1}{2}}\left(\pm\zeta_{2^{k-t}}^{\rho_{u,1}}\pm \zeta_{2^{k-t}}^{\rho_{u,1}}\zeta_{2^{k-t}}^{2^{k-t-2}}\right)
= 2^{\frac{n-1}{2}}\zeta_{2^{k-t}}^{\rho_{u,1}}(\pm 1\pm i),
\end{align*}
hence $h$ is gbent. For $h+2^{k-2t}g$, using that $\epsilon_{u,2}=\epsilon_{u,1}:=\epsilon_u$ we obtain
\begin{align*}
\mathcal{H}_{h+2^{k-2t}g}^{(2^{k-t})}(u) \
& = \sum_{x\in\V_n}\zeta_{2^{k-t}}^{h_{u}(x)+2^{k-2t}g(x)}\\
&= 2^{\frac{n-1}{2}}\left(\pm\zeta_{2^{k-t}}^{\rho_{u,1}+2^{k-2t}\epsilon_{u}}\pm \zeta_{2^{k-t}}^{\rho_{u,2}+2^{k-2t}\epsilon_{u}}\right) \\
& = 2^{\frac{n-1}{2}}\zeta_{2^{k-t}}^{\rho_{u,1}+2^{k-2t}\epsilon_{u}}(\pm 1\pm i),
\end{align*}
and hence $h+2^{k-2t}g$ is gbent.
\end{proof}
With Proposition \ref{thm-tgen} we can conclude the equivalence of the conditions in Theorem \ref{k,k-1Thm} also for odd $n$.
We use multivariate notation, but keep in mind that many results also apply to g-hyperbent functions, which are only defined
when $\V_n = \F_{2^n}$.
\begin{corollary}
\label{iff-cor}
Let $f\in\mathcal{GB}_n^{2^k}$ with $f(\xx)=g(\xx)+2h(\xx), g\in\cB_n,h\in\cGB_n^{2^{k-1}}$.
Let $n$ be even or $k\ge 3$, then the following statements are equivalent.
\begin{itemize}
\item[$(i)$] $f$ is gbent in $\cGB_n^{2^{k}}$;
\item[$(ii)$] $h$ and $h+2^{k-2}g$ are both gbent in $\cGB_n^{2^{k-1}}$ with $\cH_{h+2^{k-2}g}(\uu)=\pm \cH_{h}(\uu)$
for all $\uu\in \V_n$.
\end{itemize}
\end{corollary}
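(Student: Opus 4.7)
The plan is to reduce Corollary~\ref{iff-cor} to results already in hand. Theorem~\ref{k,k-1Thm} supplies the implication $(ii)\Rightarrow(i)$ for both $n$ even and $n$ odd, so the only direction left to verify is $(i)\Rightarrow(ii)$ in the newly admitted case $n$ odd, $k\ge 3$. My first step is to apply Proposition~\ref{thm-tgen} with $t=1$: the hypotheses ``$k\ge 2t$'' and ``$n$ even or $k\ge 3$'' reduce to exactly the hypotheses of the corollary, and the conclusion gives at once that $h$ and $h+2^{k-2}g$ are both gbent in $\cGB_n^{2^{k-1}}$. This already accounts for the first half of $(ii)$.

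The remaining task is to verify the relation $\cH_{h+2^{k-2}g}(\uu)=\pm\cH_h(\uu)$ for every $\uu\in\V_n$. For this I would reread the explicit formulas produced in the odd-$n$ portion of the proof of Proposition~\ref{thm-tgen} with $t=1$. Proposition~\ref{GB-con}, applied to $f$, provides integers $\epsilon_{\uu,1},\epsilon_{\uu,2}\in\{0,1\}$ and $\rho_{\uu,1},\rho_{\uu,2}\in\{0,\ldots,2^{k-1}-1\}$ characterizing where the cardinalities of the level sets $S^{(\uu)}(e,r)=\{\xx\in\V_n:g(\xx)=e,\,h_\uu(\xx)=r\}$ deviate from the flat pattern, and the short arithmetic argument already carried out in the proof of Proposition~\ref{thm-tgen} forces $\epsilon_{\uu,1}=\epsilon_{\uu,2}=:\epsilon_\uu$ and $\rho_{\uu,2}-\rho_{\uu,1}=2^{k-3}$. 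Expanding both generalized Walsh--Hadamard transforms via these level sets then yields
\begin{align*}
\cH_h^{(2^{k-1})}(\uu) &= 2^{(n-1)/2}\zeta_{2^{k-1}}^{\rho_{\uu,1}}(\pm 1\pm i), \\
\cH_{h+2^{k-2}g}^{(2^{k-1})}(\uu) &= 2^{(n-1)/2}\zeta_{2^{k-1}}^{\rho_{\uu,1}+2^{k-2}\epsilon_\uu}(\pm 1\pm i).
\end{align*}
Since $\zeta_{2^{k-1}}^{2^{k-2}\epsilon_\uu}=(-1)^{\epsilon_\uu}\in\{+1,-1\}$, the desired identity $\cH_{h+2^{k-2}g}(\uu)=\pm\cH_h(\uu)$ follows immediately.

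The single technical point that requires care, and the step I would flag as the main obstacle, is confirming that the $(\pm 1\pm i)$ factors on the two right-hand sides above use the \emph{same} sign choices. This is the case because those signs are governed by the differences $|S^{(\uu)}(\epsilon_\uu,\rho_{\uu,j}+2^{k-2})|-|S^{(\uu)}(\epsilon_\uu,\rho_{\uu,j})|=\pm 2^{(n-1)/2}$, whose values depend only on the level-set distribution of $f_\uu$ itself and are therefore unaffected by whether the transform being computed is that of $h$ or of $h+2^{k-2}g$. Once this bookkeeping is recorded, the ratio of the two transforms is precisely the unimodular factor $(-1)^{\epsilon_\uu}$, which completes the proof.
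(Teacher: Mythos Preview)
Your proof is correct and follows the same overall strategy as the paper: invoke Theorem~\ref{k,k-1Thm} for the even case and for $(ii)\Rightarrow(i)$, then apply Proposition~\ref{thm-tgen} with $t=1$ to obtain that $h$ and $h+2^{k-2}g$ are gbent. The one place you diverge is in establishing the sign relation $\cH_{h+2^{k-2}g}(\uu)=\pm\cH_h(\uu)$. The paper simply cites Remark~\ref{r4odd}, which records that once $h$ and $h+2^{k-2}g$ are known to be gbent, the gbentness of $f$ is \emph{equivalent} to that sign relation; hence the relation drops out immediately. You instead re-derive it by hand from the level-set formulas in the proof of Proposition~\ref{thm-tgen}. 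Your direct argument is sound: since $\epsilon_{\uu,1}=\epsilon_{\uu,2}=\epsilon_\uu$, the two transforms differ term-by-term by the common factor $\zeta_{2^{k-1}}^{2^{k-2}\epsilon_\uu}=(-1)^{\epsilon_\uu}$, and the ``bookkeeping'' issue you flag is exactly the observation that both surviving coefficients $|S^{(\uu)}(\epsilon_\uu,\rho_{\uu,j})|-|S^{(\uu)}(\epsilon_\uu,\rho_{\uu,j}+2^{k-2})|$ are intrinsic to $f_\uu$ and hence identical in the two expansions. This is a perfectly good alternative, just longer than the paper's one-line appeal to Remark~\ref{r4odd}. (One small slip: the range for $\rho_{\uu,j}$ should be $\{0,\ldots,2^{k-2}-1\}$, not $\{0,\ldots,2^{k-1}-1\}$, matching $0\le\rho_{u,j}\le 2^{k-t-1}-1$ with $t=1$; this does not affect the argument.)
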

\begin{proof}
For even $n$, the corollary is Theorem~\ref{k,k-1Thm}. By Remark~\ref{r4odd}, for odd $n$ it suffices to show that
$h$ and $h+2^{k-2}g$ are both gbent in $\cGB_n^{2^{k-1}}$ if $f$ is gbent in $\cGB_n^{2^{k}}$. This follows for
$k\ge 3$ from Proposition \ref{thm-tgen} with $t=1$.
\end{proof}
We can now show one of our main theorems about the decomposition of g-(hyper)bent functions. Proposition~\ref{gbebe}$(i)$, that is, Theorem 18 in~\cite{mms0}, will also follow from this theorem as a special case.
\begin{theorem}
Let $f\in\mathcal{GB}_n^{2^k}$, $k\ge 2$, with $f(\xx)=a_1(\xx)+2a_2(\xx) + \cdots + 2^{k-1}a_k(\xx)$,
$a_i\in\cB_n$, $1\leq i\leq k$, be a gbent function. Let $1\leq s\leq k$, and let $\cc \in \F_2^{s-1}$. The function
\[ g_\cc(x) = a_s+2a_{s+1}+\cdots+2^{k-s} \left(\bigoplus_{i=1}^{s-1} c_i a_i\+ a_k\right) \]
is a gbent function in $\cGB_n^{2^{k-s+1}}$ if
\begin{itemize}
\item[-] $n$ is even,
\item[-] $n$ is odd and $s<k$.
\end{itemize}
Moreover, for $\cc_0 = (c_1,\ldots,c_{s-2},0)$, $\cc_1 = (c_1,\ldots,c_{s-2},1)$ we have
\[ \cH_{g_{\cc_1}}(\uu)=\pm\cH_{g_{\cc_0}}(\uu), \]
for all $\uu\in\V_n$. \\
If $n$ is odd and $s=k$ (hence $g_\cc$ is Boolean), then $g_\cc$ is semibent.
\end{theorem}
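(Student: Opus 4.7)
The plan is to prove the first part by induction on $s$, using Corollary~\ref{iff-cor} at each step to peel off one coordinate, and to handle the final semibent assertion by invoking Proposition~\ref{gbebe}$(ii)$ with $k=2$ on the $\cGB_n^4$ function produced at the penultimate step.

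The base case $s=1$ is immediate, since $\cc$ is the empty tuple and $g_\cc = f$ is gbent by hypothesis. For the inductive step, given $\cc = (c_1,\ldots,c_{s-1}) \in \F_2^{s-1}$ I set $\cc' = (c_1,\ldots,c_{s-2}) \in \F_2^{s-2}$. By the inductive hypothesis,
\[ g_{\cc'}(\xx) = a_{s-1}(\xx)+2a_s(\xx)+\cdots+2^{k-s+1}\Big(\bigoplus_{i=1}^{s-2}c_i a_i(\xx)\+ a_k(\xx)\Big) \]
is gbent in $\cGB_n^{2^{k-s+2}}$. I would then write $g_{\cc'}(\xx) = a_{s-1}(\xx) + 2 H(\xx)$ with
\[ H(\xx) = a_s(\xx)+2a_{s+1}(\xx)+\cdots+2^{k-s}\Big(\bigoplus_{i=1}^{s-2}c_i a_i(\xx)\+ a_k(\xx)\Big), \]
and apply Corollary~\ref{iff-cor} taking its $k$ to be $k-s+2$. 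The corollary's hypothesis is met when $n$ is even or $k-s+2\ge 3$, i.e.\ $s\le k-1$; in that range it yields that both $H$ and $H + 2^{k-s}a_{s-1}$ are gbent in $\cGB_n^{2^{k-s+1}}$ with $\cH_{H+2^{k-s}a_{s-1}}(\uu) = \pm\cH_H(\uu)$. A direct inspection identifies $H$ with $g_{\cc_0}$ (the choice $c_{s-1}=0$) and $H + 2^{k-s}a_{s-1}$ with $g_{\cc_1}$ (the choice $c_{s-1}=1$), delivering simultaneously the gbent conclusion and the claimed sign relation. Note that for $n$ even this argument is valid all the way up to $s=k$, where $\cGB_n^{2^{k-s+1}} = \cB_n$ and ``gbent'' reduces to bent.

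For the semibent assertion, when $n$ is odd and $s=k$ I would iterate the above reduction through to $s=k-1$, producing gbent $g_{\cc'} \in \cGB_n^4$ for every $\cc' \in \F_2^{k-2}$. Applying Proposition~\ref{gbebe}$(ii)$ with $k=2$ to such a $g_{\cc'} = a_{k-1} + 2\big(\bigoplus_{i=1}^{k-2}c_i a_i \+ a_k\big)$ yields that both Boolean functions obtained by XOR-ing $c\cdot a_{k-1}$ ($c\in\F_2$) into its top bit are semibent; these are exactly $g_\cc$ for $\cc = (\cc',c)$, and letting $\cc'$ and $c$ vary covers all of $\F_2^{k-1}$.

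I expect the principal obstacle to be the bookkeeping that matches $H + 2^{k-s}a_{s-1}$, where the sum is taken in $\Z_{2^{k-s+1}}$, with $g_{\cc_1}$, where $a_{s-1}$ enters via Boolean XOR. This equivalence holds because $2^{k-s}$ is the top bit of the ring $\Z_{2^{k-s+1}}$, so addition and XOR agree modulo $2^{k-s+1}$ at that coefficient; once this observation is in hand, the remainder of the argument is a clean cascade of applications of Corollary~\ref{iff-cor}, terminated for odd $n$ by a single appeal to Proposition~\ref{gbebe}$(ii)$.
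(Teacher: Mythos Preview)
Your proposal is correct and follows essentially the same inductive route as the paper: peel off one coordinate at a time via Corollary~\ref{iff-cor}, tracking the constraint $k-s+2\ge 3$ for odd $n$, and identify the two resulting functions with $g_{\cc_0}$ and $g_{\cc_1}$. The only deviation is in the final semibent step for odd $n$: the paper invokes Theorem~\ref{sbent} directly on $f$, whereas you instead stop the induction at $\cGB_n^{4}$ and appeal to the $k=2$ case of Proposition~\ref{gbebe}$(ii)$ --- both are valid, and your route has the mild advantage of only needing the already-cited $k=2$ semibent fact rather than the full Theorem~\ref{sbent}.
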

\begin{proof}
We show the result by induction. If $s=1$, the claim is obvious. If $s=2$, by taking $g:=a_1$, $h:=a_2+2a_3+\cdots+2^{k-2}a_k$, the claim
follows from Corollary \ref{iff-cor}, since then $f$ is gbent if and only if both $h=a_2+2a_3+\cdots+2^{k-2}a_k$, $h+2^{k-2} g=a_2+2a_3+\cdots+2^{k-2}(a_1\+a_k)$
are gbent and $\cH_h(\uu)=\pm \cH_{h+2^{k-2} g}(\uu)$, for all $\uu$.
Assume the result is true for some $s\le k-1$, i.e., $g_\cc(\xx) = a_s+2a_{s+1}+\cdots+2^{k-s} \left(\bigoplus_{i=1}^{s-1} c_i a_i\+ a_k\right)$
is a gbent function in $\mathcal{GB}_n^{2^{k-s+1}}$ for all $\cc = (c_1,\ldots,c_{s-1})\in\F_2^{s-1}$. We show that it then also holds for $s+1$.
We apply Corollary \ref{iff-cor} to $g_\cc\in\mathcal{GB}_n^{2^{k-s+1}}$. Note that we therefore require $k-s+1 \ge 3$, i.e., $s\le k-2$, if $n$ is odd.
We obtain that for $(\cc,0)$ and $(\cc,1)$ in $\F_2^{s-1}$, both
\[ g_{(\cc,0)} = a_{s+1}+2a_{s+2}+\cdots+2^{k-s-1} \left(\bigoplus_{i=1}^{s-1} c_i a_i\+ a_k\right) \]
and
\[ g_{(\cc,1)} = a_{s+1}+2a_{s+2}+\cdots+2^{k-s-1} \left(\bigoplus_{i=1}^{s-1} c_i a_i\+a_s\+ a_k\right) \]
are gbent functions in $\mathcal{GB}_n^{2^{k-s}}$. Therefore $g_\cc\in\mathcal{GB}_n^{2^{k-s}}$ is gbent for every $\cc\in\F_2^s$,
$1\le s\le k$ when $n$ is even, and $1\le s\le k-1$ when $n$ is odd.
Moreover, again applying Corollary~\ref{iff-cor}, we get
\[ 
\cH_{g_{(\cc,1)}}(\uu)=\pm\cH_{g_{(\cc,0)}}(\uu),
 \]
for all $\uu\in\V_n$. 
By Theorem \ref{sbent}, if $n$ is odd and $s=k$ then for every $\cc\in\F_2^{s-1}$ the Boolean function $g_\cc$ is semibent.
\end{proof}

\begin{remark}
If, conversely, $\cH_{g_{(\cc,1)}}(\uu)=\pm\cH_{g_{(\cc,0)}}(\uu)$ holds for all $\uu\in\V_n$ and $\cc\in\F_2^{s-1}$,
by Corollary~\textup{\ref{iff-cor}}, all functions $g_\cc(\xx) = a_s(\xx)+2a_{s+1}(\xx)+\cdots+2^{k-s} \left(\bigoplus_{i=1}^{s-1} c_i a_i(\xx)\+ a_k(\xx)\right) \in \mathcal{GB}_n^{2^{k-s+1}}$
are gbent. However, we have to impose the analog property for this set of gbent functions for the next step.
\end{remark}

We finish with a decomposition of gbent functions in
$\mathcal{GB}_n^{2^{lt}}$ into gbent functions in
$\mathcal{GB}_n^{2^t}$. The following theorem generalizes both,
Proposition~\ref{gbebe}$(i)$, \cite[Theorem 12]{mms0} and partially
Theorem \ref{ghyperbent}. To this end, let us introduce additional
notation and present some facts that shall help us in our analysis. The
core of the proof of Theorem \ref{ghyperbent} is (\ref{eq:extension})
which simply expresses the following decomposition of
$\zetak^{2^j\alpha}$ for $\alpha\in\{0,1\}$ with respect to
$\{1,\zetak[2]=-1\}$. In fact, equation (\ref{eq:extension}) is simply
a particular case. Indeed, one can express more generally
$\zetak^{2^j\alpha}$, $\alpha\in\ZZ{2^t}$, with respect to
$\{1,\ldots,\zetak[2^t]\}$ if $t$ is a divisor of $k$.  Let
$\mathcal V_{2^t}(\zetak[2^t])$ and
$\mathcal V_{2^t}(\zetak[2^t]^{-1})$ be the $2^t\times 2^t$
Vandermonde matrices :
\begin{align*}
  \mathcal V_{2^t}(\zetak[2^t])=\left(\begin{array}{cccc}
                     1 & 1 & \cdots & 1 \\
                     1 & \zetak[2^t] &  & \zetak[2^t]^{2^t-1} \\
                     \vdots & \vdots & & \vdots \\
                     1 & \zetak[2^t]^{2^t-1} & & \zetak[2^t]^{(2^t-1)(2^t-1)} \\
                   \end{array}
  \right)
\end{align*}
and
\begin{align*}
  \mathcal V_{2^t}(\zetak[2^t]^{-1})=\left(\begin{array}{cccc}
                     1 & 1 & \cdots & 1 \\
                     1 & \zetak[2^t]^{-1} &  & \zetak[2^t]^{-(2^t-1)} \\
                     \vdots & \vdots & & \vdots \\
                     1 & \zetak[2^t]^{-(2^t-1)} & & \zetak[2^t]^{-(2^t-1)(2^t-1)}
                   \end{array}
  \right).
\end{align*}
Observe that
\begin{align}\label{eq:invandermonde}
  \mathcal V_{2^t}(\zetak[2^t]) \mathcal V_{2^t}(\zetak[2^t]^{-1}) = 2^t\mathbf{I}_{2^t},
\end{align}
where $\mathbf{I}_{2^t}$ stands for the identity matrix of size $2^t$.
Define now a collection of maps from $\mathbb C$ to itself by setting
\begin{align*} 
  \left(
  \begin{array}{c}
    h_0(z) \\ h_1(z) \\ \vdots \\ h_{2^t-1}(z)
  \end{array}
  \right)
  =
  \mathcal V_{2^t}(\zetak[2^t]^{-1})
  \left(
  \begin{array}{c}
    1 \\ z \\ \vdots \\ z^{2^t-1}
  \end{array}
  \right)
\end{align*}
or equivalently, for any $\alpha\in\ZZ{2^t}$,
\begin{align}\label{eq:expressionhz}
  h_\alpha(z) 
  &= \sum_{\beta\in\ZZ{2^t}} \zetak[2^t]^{-\alpha\beta}z^\beta.
\end{align}
Furthermore, according to (\ref{eq:invandermonde}), one has,
for any $z\in\C$, 
\begin{align}\label{eq:decompzetakt}
   \left(
  \begin{array}{c}
    1 \\ z \\ \vdots \\ z^{2^t-1}
  \end{array}
  \right)
  =
  \frac{1}{2^t}\mathcal V_{2^t}\left(\zetak[2^t]\right)
  \left(
  \begin{array}{c}
    h_0(z) \\ h_1(z) \\ \vdots \\ h_{2^t-1}(z)
  \end{array}
  \right)
\end{align}
that is, for $\beta\in\ZZ{2^t}$,
\begin{align}
  \label{eq:expressionzh}
  z^\beta = \frac{1}{2^t} \sum_{\alpha\in\ZZ{2^t}}\zetak[2^t]^{\alpha\beta}h_{\beta}(z).
\end{align}
Then, we show the next theorem.
\begin{theorem}
  Let $n$ be even. Let $k=lt$ and let $f\in\mathcal{GB}_n^{2^k}$ be a
  g-hyperbent function given as
\[ f(\xx) = b_1(\xx) + 2^tb_2(\xx) + \cdots + 2^{(l-1)t}b_l(\xx), \]
for some functions $b_i\in\mathcal{GB}_n^{2^t}$, $1\le i\le l$. If $n$ is even or $t\ge 2$, then for every
$\cc = (c_1,c_2,\ldots,c_{l-1}) \in \Z_{2^t}^{l-1}$, the function
\[ g_{\cc}(\xx) = c_1b_1(\xx) + \cdots + c_{l-1}b_{l-1}(\xx) + b_l(\xx) \in \mathcal{GB}_n^{2^t} \]
is g-hyperbent.
\end{theorem}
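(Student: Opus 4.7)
The plan is to mimic the proof of Theorem \ref{ghyperbent}, lifting the $\Q$-basis argument used there to a $\Q(\zeta_{2^t})$-basis argument over the intermediate cyclotomic subfield. First, factor
\[
  \zeta_{2^k}^{f(\xx)} = \prod_{j=1}^{l} \zeta_{2^k}^{2^{(j-1)t} b_j(\xx)} = \prod_{j=1}^{l}\bigl(\zeta_{2^{(l-j+1)t}}\bigr)^{b_j(\xx)}
\]
and apply~(\ref{eq:expressionzh}) to each factor with $z = \zeta_{2^{(l-j+1)t}}$ and $\beta = b_j(\xx)$, yielding $\zeta_{2^k}^{2^{(j-1)t}b_j(\xx)} = \sum_{\alpha\in\Z_{2^t}} A^{(j)}_{\alpha}\zeta_{2^t}^{\alpha b_j(\xx)}$ with $A^{(j)}_\alpha := \frac{1}{2^t}h_\alpha(\zeta_{2^{(l-j+1)t}})$. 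The key observation is that for the last factor $j = l$ the argument $\zeta_{2^{(l-j+1)t}}$ equals $\zeta_{2^t}$ itself, so by orthogonality $h_\alpha(\zeta_{2^t}) = 2^t\,\mathbf{1}_{\{\alpha=1\}}$; thus only $\alpha_l = 1$ contributes. Expanding the remaining product and applying the extended Walsh--Hadamard transform yields
\[
  \zeta_{2^k}^{f(\xx)} = \sum_{\cc\in\Z_{2^t}^{l-1}} B_\cc\,\zeta_{2^t}^{g_\cc(\xx)}, \qquad B_\cc := \prod_{j=1}^{l-1}A^{(j)}_{c_j},
\]
and hence $\mathcal{H}^{(2^k)}_{f,i}(u) = \sum_{\cc}B_\cc\,\mathcal{H}^{(2^t)}_{g_\cc,i}(u)$.

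Next, rewriting $\zeta_{2^t} = \zeta_{2^k}^{2^{k-t}}$ in the definition of $A^{(j)}_\alpha$ gives the dual Fourier expansion
\[
  B_\cc = \frac{1}{2^{t(l-1)}}\sum_{\dd\in\Z_{2^t}^{l-1}} \zeta_{2^t}^{-\cc\cdot\dd}\,\zeta_{2^k}^{\tau(\dd)}, \qquad \tau(\dd) := d_1 + 2^t d_2 + \cdots + 2^{(l-2)t}d_{l-1},
\]
where $\tau$ is a bijection $\Z_{2^t}^{l-1} \to \{0,1,\ldots,2^{k-t}-1\}$. Substituting into the previous relation and swapping the order of summation,
\[
  \mathcal{H}^{(2^k)}_{f,i}(u) = \sum_{\dd} Y_{\dd}(u,i)\,\zeta_{2^k}^{\tau(\dd)}, \quad Y_{\dd}(u,i) := \frac{1}{2^{t(l-1)}}\sum_{\cc} \zeta_{2^t}^{-\cc\cdot\dd}\,\mathcal{H}^{(2^t)}_{g_\cc,i}(u) \in \Q(\zeta_{2^t}).
\]
Since $[\Q(\zeta_{2^k}):\Q(\zeta_{2^t})] = 2^{k-t}$ and each $j\in\{0,\ldots,2^{k-1}-1\}$ decomposes uniquely as $j = r + m\cdot 2^{k-t}$ with $r\in\{0,\ldots,2^{k-t}-1\}$ and $m\in\{0,\ldots,2^{t-1}-1\}$, the family $\{\zeta_{2^k}^{\tau(\dd)}\}_\dd$ is a $\Q(\zeta_{2^t})$-basis of $\Q(\zeta_{2^k})$; hence the $Y_\dd(u,i)$ are uniquely determined by $\mathcal{H}^{(2^k)}_{f,i}(u)$.

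Under the g-hyperbent hypothesis, regularity gives $\mathcal{H}^{(2^k)}_{f,i}(u) = 2^{n/2}\zeta_{2^k}^{\phi_i(u)}$ (valid in the non-trivial range $l\ge 2$, since then $k\ge 3$ also when $n$ is odd, ensured by $t\ge 2$). Decomposing $\phi_i(u) = \phi_i^{\mathrm{lo}}(u) + 2^{k-t}\phi_i^{\mathrm{hi}}(u)$ with $\phi_i^{\mathrm{lo}}(u)\in\{0,\ldots,2^{k-t}-1\}$ and $\phi_i^{\mathrm{hi}}(u)\in\Z_{2^t}$ yields $\mathcal{H}^{(2^k)}_{f,i}(u) = 2^{n/2}\zeta_{2^t}^{\phi_i^{\mathrm{hi}}(u)}\zeta_{2^k}^{\phi_i^{\mathrm{lo}}(u)}$, which is a single basis term. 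Matching coefficients forces $Y_\dd(u,i) = 2^{n/2}\zeta_{2^t}^{\phi_i^{\mathrm{hi}}(u)}$ when $\tau(\dd) = \phi_i^{\mathrm{lo}}(u)$, and $0$ otherwise. Inverting the Fourier transform on $\Z_{2^t}^{l-1}$ then gives
\[
  \mathcal{H}^{(2^t)}_{g_\cc,i}(u) = \sum_{\dd} \zeta_{2^t}^{\cc\cdot\dd}\,Y_\dd(u,i) = 2^{n/2}\zeta_{2^t}^{\cc\cdot\tau^{-1}(\phi_i^{\mathrm{lo}}(u)) + \phi_i^{\mathrm{hi}}(u)},
\]
so $|\mathcal{H}^{(2^t)}_{g_\cc,i}(u)| = 2^{n/2}$ for all $i$ coprime with $2^n-1$, proving $g_\cc$ is g-hyperbent. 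The main obstacle is precisely this basis-lifting step: unlike the Boolean-component setting of Theorem~\ref{ghyperbent}, each $\mathcal{H}^{(2^t)}_{g_\cc,i}(u)$ lives in $\Q(\zeta_{2^t})$ rather than $\Q$, so one must enlarge the coefficient field and verify that the range of $\tau$ picks out exactly the correct $\Q(\zeta_{2^t})$-basis of size $2^{k-t}$.
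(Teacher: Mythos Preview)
Your proof is correct and follows essentially the same route as the paper: both establish the identity $\mathcal{H}^{(2^k)}_{f,i}(a) = \frac{1}{2^{k-t}}\sum_{(\cc,\dd)}\zeta_{2^t}^{-\cc\cdot\dd}\,\zeta_{2^k}^{\tau(\dd)}\,\mathcal{H}^{(2^t)}_{g_\cc,i}(a)$ via the Vandermonde expansion~(\ref{eq:expressionhz})--(\ref{eq:expressionzh}), and then invoke the cyclotomic basis argument. The paper actually stops at this identity and simply says it is ``straightforward to repeat the arguments of the proof of Theorem~\ref{ghyperbent}'', whereas you carry out explicitly the $\Q(\zeta_{2^t})$-basis matching and the Fourier inversion on $\Z_{2^t}^{l-1}$---so your write-up is in fact the more complete of the two.
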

\begin{proof}
  The extended Hadamard-Walsh transform of $f$ is :
  \begin{align}\label{eq:WHt}
  \GWa[,i]{2^k}{f}(a) 
    &= \sum_{x\in\GF n}\zetak^{f(x)}(-1)^{\Tr(ax^i)}.
  \end{align}
Using~(\ref{eq:expressionhz}) and~(\ref{eq:expressionzh}), one gets
\begin{align*}
  \zetak^{f(x)} &= \zetak[2^t]^{b_l(x)}\prod_{j=1}^{l-1}\zetak^{2^{(j-1)t}b_j(x)}
  =   \zetak[2^t]^{b_l(x)} \prod_{j=1}^{l-1}\Bigg( \frac{1}{2^t}\sum_{c_j\in\ZZ{2^t}}\zetak[2^t]^{c_j b_j(x)}h_{c_j}\left(\zetak^{2^{(j-1)t}}\right)\Bigg)
    \\
   &=  \zetak[2^t]^{b_l(x)}\prod_{j=1}^{l-1}\Bigg( \frac{1}{2^t}\sum_{c_j\in\ZZ{2^t}}\zetak[2^t]^{c_j b_j(x)} \sum_{d_j\in\ZZ{2^t}} \zetak[2^t]^{-c_jd_j}\zetak^{2^{(j-1)t}d_j}
  \Bigg)\\
  &=   \zetak[2^t]^{b_l(x)}\prod_{j=1}^{l-1}\Bigg(\frac{1}{2^t}\sum_{(c_j,d_j)\in\ZZ{2^t}^2}\zetak[2^t]^{c_j (b_j(x)-d_j)}\zetak^{2^{(j-1)t}d_j}\Bigg).
\end{align*}
Therefore,
\begin{align*}
  \zetak^{f(x)} &= \zetak[2^t]^{b_l(x)}\times 
  \frac1{2^{k-t}}\sum_{(c,d)\in\ZZ{2^t}^{l-1}\times\ZZ{2^t}^{l-1}} \zetak^{\sum_{j=1}^{l-1}2^{(j-1)t}d_j}  \zetak[2^t]^{\sum_{j=0}^{l-1}c_j(b_j(x)-d_j)}\\
  &= \frac1{2^{k-t}}\sum_{(c,d)\in\ZZ{2^t}^{l-1}\times\ZZ{2^t}^{l-1}} \zetak^{\sum_{j=1}^{l-1}2^{(j-1)t}d_j} \zetak[2^t]^{g_{\cc}(x)-c\cdot d},
\end{align*}
where $c\cdot d=\sum_{j=1}^{l-1}c_jd_j$. If we use the above relation in ~(\ref{eq:WHt}), then
\begin{align}\label{eq:relationGWa}
  \GWa[,i]{2^k}{f}(a) 
  &= \frac1{2^{k-t}}\sum_{(c,d)\in\ZZ{2^t}^{l-1}\times\ZZ{2^t}^{l-1}}\zetak[2^t]^{-c\cdot d}\,\zetak^{\sum_{j=1}^{l-1}2^{(j-1)t}d_j}\,\Wa[,i]{f_c}(a).
\end{align}
At this stage, observe that~(\ref{eq:relationGWa})
generalizes~(\ref{eq:relationGWaGray}) (which corresponds to
$t=1$). It is then quite straightforward to repeat the arguments of the
proof of Theorem~\ref{ghyperbent} to get Theorem~\ref{thm-tgen}.
\end{proof}

\section{Conclusion}

In this paper we extend the concept of a hyperbent function to generalized Boolean functions from $\F_{2^n}$ to $\Z_{2^k}$, and we present examples 
of generalized hyperbent functions obtained with partial spreads. We investigate decompositions of generalized (hyper)bent functions (gbent respectively g-hyperbent functions).
We prove that g-(hyper)bent functions from $\F_{2^n}$ to $\Z_{2^k}$ decompose into g-(hyper)bent functions from $\F_{2^n}$ to $\Z_{2^{k^\prime}}$ 
for some $k^\prime < k$. We show that when $n$ is odd, then the Boolean functions associated to a generalized bent function form an affine space of semibent functions. 
This complements a result \cite{mms0}, where it is shown that for even $n$ the associated Boolean functions are bent.

We finally remark that for a gbent function from $\V_n$ to $\Z_{2^k}$, the function $cf$ is in general not gbent when $c\in\Z_{2^k}$
is even. Functions for which $cf$ is gbent for every nonzero $c$ seem to be quite rare. Some examples obtained from partial 
spreads are in \cite{mms}. Such functions may be particularly interesting for future research as they yield relative difference sets
(being bent). For a general discussion on relative difference sets and functions between arbitrary abelian groups we refer to \cite{p04}. \\[.5em]

\noindent
{\bf Acknowledgement.} The second author is supported by the Austrian Science Fund (FWF) Project no. M 1767-N26.

\end{document}